%% file: main.tex
\documentclass[sigconf, nonacm]{acmart}

\hypersetup{pdfauthor={Prashant Kumar Pathak},
  pdftitle={Coverage Is Not Redundancy: Query-Aware Admission Indexes in Vector
  Databases Under Workload Drift}}

\usepackage{booktabs}

\begin{document}

\title{Coverage Is Not Redundancy: Maintenance Cost and Exposure of
Query-Aware Admission Indexes in Vector Databases Under Workload Drift}

\author{Prashant Kumar Pathak}
\affiliation{%
  \institution{Independent Researcher}
  \city{Milpitas}
  \state{CA}
  \country{USA}
}
\email{prashant.pathak@ieee.org}

\begin{abstract}
In a vector database serving retrieval at production scale, a single inserted document
can be retrieved for an anomalously large share of the query workload---a retrieval
\emph{hub}---and dominate the evidence returned for an entire topic. An emerging line of
work guards against this at ingest with an \emph{admission check} that rejects such a
document, maintaining a set of \emph{sentinel queries} and admitting a document only if
its reverse-$k$NN count $\kappa_S$ against them stays below a threshold $\tau$. Under
workload drift this sentinel set becomes a query-aware auxiliary index that must be
maintained online, and we study the cost that maintenance imposes on the ingest path. We
identify a structural limit---\emph{coverage is not redundancy}: a monitor stops
promoting sentinels once a region is \emph{covered}, but the predicate rejects a hub only
once $\tau$ sentinels \emph{witness} it, so exposure has an \emph{observation-limited}
floor $m_{\mathrm{safe}}+\lceil(\tau-c_0)_+/r\rceil$ that no reduction in update or
enforcement latency can close. On real HNSW, IVF-Flat, and IVF-PQ indexes over an
$8.8$M-vector MS MARCO corpus this floor is only a best case: as index recall falls,
exposure and churn rise above it and below recall $\approx0.5$ the gate stops containing
altogether---worst on the memory-compressed IVF-PQ used at billion scale---while a
recall-aware witness probe restores containment at a fixed $O(|S|d)$ admission cost,
under $0.1\%$ of the ANN insert. We validate the law under real (COVID-19) workload
drift, implement the gate in PostgreSQL/\texttt{pgvector} at a $0.33\%$ ingest tax, and
turn the bound into a provisioning rule that sizes the sentinel budget per emerging
region. A count test contains the hub where retrieval-time score normalizers (NNN,
QB-Norm) do not, and a pre-registered causal suite isolates the missing-coverage
mechanism from retrieval fragmentation across two embedding families (BGE-1024, E5-768).
Whether the resulting exposure crosses a given deployment budget is a calibrated workload
question; that an observation-limited floor exists, is latency-independent, and degrades
sharply with index recall is not.
\end{abstract}

\begin{CCSXML}
<ccs2012>
 <concept>
  <concept_id>10002951.10003317.10003371</concept_id>
  <concept_desc>Information systems~Nearest-neighbor search</concept_desc>
  <concept_significance>500</concept_significance>
 </concept>
 <concept>
  <concept_id>10002951.10002952.10003190.10003195</concept_id>
  <concept_desc>Information systems~Data management systems</concept_desc>
  <concept_significance>300</concept_significance>
 </concept>
 <concept>
  <concept_id>10002951.10003317.10003347</concept_id>
  <concept_desc>Information systems~Information retrieval</concept_desc>
  <concept_significance>300</concept_significance>
 </concept>
</ccs2012>
\end{CCSXML}

\ccsdesc[500]{Information systems~Nearest-neighbor search}
\ccsdesc[300]{Information systems~Data management systems}
\ccsdesc[300]{Information systems~Information retrieval}

\keywords{Vector databases, approximate nearest neighbor search, incremental index
maintenance, query distribution drift, retrieval hubness, admission control,
ingest-time filtering, cost--exposure analysis}

\maketitle

\input{body}

\bibliographystyle{ACM-Reference-Format}
\bibliography{refs}

\end{document}

%% file: body.tex
\section{Introduction}

Vector databases back retrieval-augmented systems at production scale,
answering each query with the top-$k$ nearest documents from an HNSW- or IVF-indexed
corpus~\cite{hnsw,pq}. Because one document can be the nearest neighbor of many queries
at once, a single inserted document can come to be retrieved for a large fraction of a
topic's queries---a retrieval \emph{hub}~\cite{hubness}---and once it is, it dominates
the evidence the system returns for that entire topic. Operators meet this as a concrete
data-quality hazard, not an abstract one: an over-general FAQ page returned for every
question in a support corpus, a boilerplate or SEO-inflated page in an open web crawl,
or one tenant's promotional document surfacing across a shared multi-tenant store. In
each case a single document crowds out the specific evidence a query needs, and it does
so silently---nothing in a standard ANN index flags that one vector has become a hub.

The natural guard is an ingest-time \emph{admission check}: before a document is
committed to the store, test whether it would be retrieved for too large a share of the
workload and reject it if so. A practical realization keeps a set of \emph{sentinel
queries} $S$ drawn from the live workload and rejects a document $d$ once its
reverse-$k$NN count $\kappa_S(d)$---the number of sentinels that retrieve it---reaches a
threshold $\tau$ calibrated to a benign false-positive rate. As with any ingest-time
integrity check, the write that stresses it is the worst case: a document shaped to
dominate its target region, which we study exactly as a database reasons about
worst-case writes.\footnote{The same worst case has a security reading---adversarial
hubs and corpus poisoning~\cite{advhub,poisonedrag,corpuspoison}, studied for this gate
in concurrent work~\cite{pathakGating,pathakContainment}. Here the object is the
ingest-path maintenance cost.} On a \emph{stationary} workload the check is easy and
effective: a hub must lie near its region's query mass, which a representative $S$
already samples, so a high true reverse-$k$NN count and a low $\kappa_S$ cannot coexist.

Production workloads are not stationary. Query distributions drift, new topics emerge,
and the corpus grows, so $S$ is not fixed but a \emph{query-aware auxiliary index} that
must be maintained online as the workload moves. Keeping the admission check sound under
drift is therefore an incremental index-maintenance problem, and this paper studies its
cost on the ingest path---a cost that, we show, hides a structural limit.

That limit is a disagreement between two maintenance policies over ``enough.'' The
\emph{monitor} is novelty-driven---it stops promoting once a region is \emph{covered} by
one sentinel---while the \emph{predicate} is a count that rejects only once $\tau$
sentinels \emph{witness} the document. When an off-axis region emerges, an inserted hub
has near-zero count against the pre-existing sentinels, so the monitor marks the region
covered long before the predicate has the $\tau$ witnesses to reject it. That gap---the
\emph{missing coverage mass}---is the exposure the ingest path leaves open, and the
maintenance churn that closes it scales with the same quantity. It is what the rest of
the paper models, bounds, and evaluates.

\paragraph{Contributions.}
\begin{enumerate}
\item A \textbf{maintenance-cost model} for a query-aware admission index under
workload drift, driven by one quantity, the missing coverage mass $(\tau-c_0)_+$
(\S\ref{sec:model}--\S\ref{sec:coupling}).
\item An \textbf{observation-limited lower bound}
$E\ge m_{\mathrm{safe}}+\lceil(\tau-c_0)_+/r\rceil$ on exposure, independent of update
and enforcement latency (\S\ref{sec:bound}).
\item An \textbf{evaluation on real ANN indexes} at $8.8$M
(\S\ref{sec:ann}--\S\ref{sec:mitigation}): approximate retrieval inflates the floor and,
below a recall threshold, breaks containment---which a recall-aware probe restores.
\item A \textbf{\texttt{pgvector} prototype} and a provisioning rule
(\S\ref{sec:experiments}): the law holds under real (COVID) drift, the gate costs a
$0.33\%$ ingest tax, and the bound provisions the system.
\end{enumerate}

\paragraph{Deployment setting.} The examples above---open-crawl RAG pipelines,
multi-tenant stores, continuously updated knowledge bases---share the property that
makes admission control on the ingest path the natural guard: the corpus is not curated,
so a freshly inserted document can capture an emerging query region. In each, the
operator's question is a systems question: what does it cost the ingest path to keep an
admission check sound as the workload drifts, and what retrieval exposure remains while
it catches up? The takeaway from this paper is that the exposure floor and its
ingest-churn cost are a recurring cost to \emph{budget for} per emerging region, not a
one-off transient (\S\ref{sec:steady}), and that on an approximate index the floor
degrades sharply with recall (\S\ref{sec:ann})---so provisioning the sentinel budget,
$\tau$, and the serving index's recall are coupled decisions, not independent ones.

\section{System Model and Workload}
\label{sec:model}

We frame the system as a vector store with an ingest-time admission stage. The
store holds a corpus indexed for top-$k$ retrieval (an HNSW or IVF index); a query
workload arrives continuously; and on each document insertion an admission
predicate runs before the document is committed to the index.

\paragraph{Running example.} We ground the model in a concrete instance: a
customer-support corpus behind a retrieval-augmented assistant. As new product areas
appear, their questions form an \emph{emerging region} of the query workload; suppose an
over-general troubleshooting page---effectively ``have you tried restarting it?''---is
then ingested close enough to that whole region to be retrieved for most of its
questions (a \emph{hub}). Until the store has observed enough queries in the new area to
recognize the page as dominant, it is admitted and supplies the assistant's context for
nearly every question there. The quantities defined below name the pieces of this story:
the \emph{exposure} is how many such questions the page answers before it is caught, and
the \emph{maintenance cost} is what it takes the ingest path to catch it as the workload
keeps drifting.

\paragraph{Admission predicate.} Production vector stores are already moving toward
continuously-updated, ingest-heavy operation, where streaming index maintenance and
freshness are first-class concerns~\cite{milvus,freshdiskann,spfresh,dedrift}; an
ingest-time governance stage that bounds a document's retrieval dominance is a natural
next component for such systems, not a hypothetical one. We treat this admission check as
the \emph{query-aware auxiliary index} a designer would add for that purpose, and ask
what it costs to maintain. The store maintains an auxiliary set of
\emph{sentinel queries} $S$ sampled from the workload---a reverse-$k$NN probe set,
structurally the same kind of object as the secondary indexes vector stores already
keep, but keyed on queries rather than documents. For a document $d$, let
$\kappa_S(d)$ be its reverse-$k$NN count---the number of sentinels $q \in S$ for
which $d$ is in the top-$k$ retrieved for $q$. The predicate \emph{rejects} $d$ iff
$\kappa_S(d) \ge \tau$, with $\tau$ calibrated on benign documents to a target FPR.
Computing $\kappa_S(d)$ at insert is the admission cost; maintaining $S$ as the
workload drifts is the maintenance cost, and is what we analyze.

\paragraph{Maintenance under drift (the monitor).} As the query distribution
drifts, $S$ must track it or the predicate goes stale. A monitor detects an
emerging query region $R$ after $m_{\mathrm{safe}}$ qualifying detections~\cite{adwin} (a cluster
threshold itself calibrated FPR-safe) and promotes sentinels placed at observed
in-region query locations, evicting redundant sentinels to hold a budget. Each
promotion is an incremental index update; the count of promotions is the
maintenance \emph{churn} borne by the ingest path. The additive
$m_{\mathrm{safe}}$ term is a detector-design overhead, not the irreducible floor:
a sweep (\S\ref{sec:results}) measures the floor as
$p_{\mathrm{cover}}\,m_{\mathrm{safe}}+\mathrm{deficit}$ (slope $1.00$, intercept
$=\mathrm{deficit}$), so the contribution-bearing, irreducible term is the deficit
$\lceil(\tau-c_0)_+/r\rceil$, bounded by the admission-FPR that fixes $\tau$.
Reducing $m_{\mathrm{safe}}$ trades against the monitor's false-region rate, which
binds only for regions overlapping the benign workload---for the well-separated
regions where windows open in our data it is itself reducible---whereas the
deficit term binds regardless.

\paragraph{Re-evaluation and revocation.} A document admitted before region $R$'s
queries arrive sits in the corpus with $\kappa_S < \tau$. Later promotions raise its
count, but admission is checked at insert, so a stale-at-insert document is
contained only if the store re-evaluates and revokes admitted documents after
sentinel updates. We thus decompose exposure
$E = E_{\mathrm{obs}} + E_{\mathrm{rev}}$: $E_{\mathrm{obs}}$ is the
observation-limited component (witness acquisition), and $E_{\mathrm{rev}}$ is the
\emph{re-evaluation overhead}---the cost of re-checking $\kappa_S\ge\tau$ against
updated sentinels and acting on the result, whether by revoking an admitted document
or by suppressing it at serve time. It is tunable by re-evaluation cadence: periodic
revocation is a coarse cadence, and a serve-time count gate (\S\ref{sec:placement})
is the maximum-cadence limit that re-evaluates on every query. Cadence is therefore
an engineering knob, not part of the floor---every cadence pays the \emph{same}
$E_{\mathrm{obs}}$, which is what our results concern.

\paragraph{Workload assumption (worst-case insertion).} The worst-case insertion is a
concept \emph{hub} $h$ targeting an emergent region $D$ at angular displacement
$\theta$ from the established query centroid. At insertion $S$ covers only
established regions, so the initial support $c_0(h) = \kappa_{S_0}(h)$ is small; in
the off-axis regime $\theta$ large, $c_0(h) < \tau$ (the geometric condition under
which a window opens---this is the only role the embedding geometry plays).
Exposure is the number of distinct in-region queries that retrieve $h$ before it is
contained. The benign version is equally important: an emerging query region
whose nearest document is an under-covered hub arises under \emph{natural} hubness
and drift~\cite{hubness,hubsurvey} with no adversary at all---a newly popular
document in a freshly drifted region is structurally the same object. The floor is
therefore a cost the maintenance must pay whenever a region emerges, whether or not
one believes the adversarial insertion.

\section{The Witness-Deficit Lower Bound}
\label{sec:bound}

\begin{definition}[Missing coverage mass / witness deficit]
For a document $h$ with initial support $c_0(h)$ against threshold $\tau$, the
\emph{missing coverage mass} is $d(h) = (\tau - c_0(h))_+$: the number of additional
sentinel witnesses the admission index must acquire in $h$'s region before the
predicate can reject $h$. We use ``missing coverage mass'' (the cost-model reading)
and ``witness deficit'' (the mechanism reading) interchangeably.
\end{definition}

\begin{lemma}[Covering implies exposing]
\label{lem:cover}
Let $\rho_{\mathrm{cov}}$ be the angular radius within which an observed query, once
promoted to a sentinel, covers $h$ (places $h$ in that sentinel's top-$k$, so the
promotion raises $\kappa_S(h)$), and let $\rho_{\mathrm{ret}}$ be the radius within
which a query retrieves $h$ ($h$ in its own served top-$k$). If
$\rho_{\mathrm{cov}} \le \rho_{\mathrm{ret}}$ then every promotion that raises
$\kappa_S(h)$ is seeded by a query that also retrieves $h$; equivalently the
per-observation cover probability is $p_{\mathrm{cover}}=1$.
\end{lemma}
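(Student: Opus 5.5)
The argument is a direct containment of angular balls, so I would state the geometry explicitly and then read off the probability claim. Fix the hub $h$ and its emergent region $D$. Let $\angle(q,h)$ be the angular distance between a query embedding $q$ and $h$. By the definition of $\rho_{\mathrm{cov}}$, an observed query $q$ whose promotion raises $\kappa_S(h)$ must satisfy $\angle(q,h)\le\rho_{\mathrm{cov}}$. I would write this as the event $B_{\mathrm{cov}}=\{q:\angle(q,h)\le\rho_{\mathrm{cov}}\}$. Similarly, $q$ retrieves $h$ exactly when $q\in B_{\mathrm{ret}}=\{q:\angle(q,h)\le\rho_{\mathrm{ret}}\}$.

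The first step uses the monitor model of \S\ref{sec:model}. There, sentinels are promoted only at \emph{observed in-region query locations}. So a promoted sentinel is a copy $s=q$ of some query $q$ that was actually served, and its top-$k$ is computed against the same corpus and index as $q$'s served top-$k$. A promotion that raises $\kappa_S(h)$ therefore has a seeding query $q$ with $q\in B_{\mathrm{cov}}$.

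The second step applies the hypothesis. Since $\rho_{\mathrm{cov}}\le\rho_{\mathrm{ret}}$, the balls are nested, $B_{\mathrm{cov}}\subseteq B_{\mathrm{ret}}$. Hence $q\in B_{\mathrm{ret}}$, so $q$ retrieved $h$ when it was served. That is the first claim: every covering promotion is seeded by a query that also counts toward exposure.

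For the equivalent form, I would define $p_{\mathrm{cover}}$ as the conditional probability, over an observation that seeds a covering promotion, that the seeding query retrieves $h$. Equivalently, it is the fraction of the $(\tau-c_0)_+$ required witnesses whose acquisition is charged to exposure. The nesting gives $\Pr[q\in B_{\mathrm{ret}}\mid q\in B_{\mathrm{cov}}]=1$, so $p_{\mathrm{cover}}=1$. Conversely, if $p_{\mathrm{cover}}=1$ for every query distribution on $D$, then $B_{\mathrm{cov}}\setminus B_{\mathrm{ret}}$ has measure zero. This recovers the radius inequality up to null sets, which is the sense in which I would read ``equivalently.''

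I expect the main obstacle to be justifying that covering and retrieval are really governed by radii around $h$ measured on a common scale. A sentinel's top-$k$ depends on the other documents near $q$, not only on $\angle(q,h)$. I would handle this in one of two ways. One is to take the radii as defined pointwise, as the statement does, so that $\rho_{\mathrm{cov}}$ and $\rho_{\mathrm{ret}}$ are thresholds below which inclusion holds. The other is to note that at the moment of promotion the sentinel's neighbourhood and the served query's neighbourhood coincide, because they are the same vector against the same index snapshot. That coincidence is what makes the nested-ball comparison legitimate, and it is also why the lemma would fail for synthetic sentinels placed at unobserved locations.
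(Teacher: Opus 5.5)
Your core argument is the paper's: a promotion that raises $\kappa_S(h)$ forces its seeding query into the $\rho_{\mathrm{cov}}$-ball around $h$, the hypothesis nests that ball inside the $\rho_{\mathrm{ret}}$-ball, and so the query already retrieved $h$ in its own served top-$k$. The paper's proof stops there and does not justify the ``equivalently $p_{\mathrm{cover}}=1$'' clause, so your conditional-probability reading of $p_{\mathrm{cover}}$ and the converse are your own additions. Note that this reading differs from how the paper uses $p_{\mathrm{cover}}$ elsewhere, for instance as the in-region hub-retrieval rate in the four-domain cross-prediction.
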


\begin{proof}
A promotion raises $\kappa_S(h)$ only if its seeding query $q$ lies within
$\rho_{\mathrm{cov}}$ of $h$ (otherwise the promoted sentinel does not retrieve $h$).
By hypothesis $\rho_{\mathrm{cov}}\le\rho_{\mathrm{ret}}$, so $q$ lies within
$\rho_{\mathrm{ret}}$ of $h$ and hence retrieves $h$ in its own served top-$k$.
\end{proof}

\paragraph{Assumptions.} The bound below holds under a stated maintenance model:
\emph{(1)}~containing $h$ requires $\kappa_S(h)\ge\tau$---a single sentinel does not
suffice; \emph{(2)}~a sentinel may be promoted only at an observed query location, with
no oracle pre-placing one on an unseen target; and \emph{(3)}~a promotion raises
$\kappa_S(h)$ by at most $r$. Assumption~2 makes the floor observation-limited; 1 and 3
fix the witness arithmetic.

\begin{proposition}[Observation-limited floor]
\label{prop:floor}
Under Assumptions 1--3 with re-evaluation/revocation semantics, the number of
documents exposed (distinct in-region queries retrieving $h$) before containment
satisfies
\[
E \;\ge\; p_{\mathrm{cover}}\,m_{\mathrm{safe}} + \left\lceil \frac{(\tau - c_0(h))_+}{r} \right\rceil,
\]
which under Lemma~\ref{lem:cover} ($p_{\mathrm{cover}}=1$, the emergence
regime) is the tight floor $m_{\mathrm{safe}} + \lceil(\tau - c_0(h))_+/r\rceil$.
The right-hand side contains no update- or enforcement-latency term.
\end{proposition}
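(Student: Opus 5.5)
The argument is a counting argument over the sequence of in-region observations, with two phases charged separately: the detection phase and the witness-acquisition phase. I would index the in-region queries that arrive after insertion of $h$ in order, and let $T$ be the first time containment can take effect. Under Assumption~1, $T$ is the first time at which $\kappa_S(h)\ge\tau$. Every in-region query arriving before $T$ that retrieves $h$ counts toward $E$. Re-evaluation and revocation semantics are used only in one direction: containment can happen no earlier than the count crossing $\tau$. They can only add the nonnegative overhead $E_{\mathrm{rev}}$, which I drop, so $E\ge E_{\mathrm{obs}}$.

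\textbf{Step 1 (detection phase).} By the monitor model, no sentinel is promoted into region $R$ until $m_{\mathrm{safe}}$ qualifying detections have occurred. During this phase $\kappa_S(h)=c_0(h)$, since only pre-existing sentinels are present. If $c_0(h)\ge\tau$ the deficit term vanishes and $h$ is rejected at insert, so in the nontrivial case $c_0(h)<\tau$ and $h$ remains admitted throughout. Each of these $m_{\mathrm{safe}}$ observations lies in $R$ and retrieves $h$ with probability $p_{\mathrm{cover}}$. Taking expectations, or counting deterministically when $p_{\mathrm{cover}}=1$, gives at least $p_{\mathrm{cover}}m_{\mathrm{safe}}$ exposures.

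\textbf{Step 2 (witness phase).} After detection, raising $\kappa_S(h)$ from $c_0(h)$ to $\tau$ needs $d(h)=(\tau-c_0(h))_+$ additional witnesses. By Assumption~3, each promotion contributes at most $r$ of them, so at least $\lceil d(h)/r\rceil$ count-raising promotions are required. By Assumption~2, each such promotion is seeded at a distinct observed query location. Lemma~\ref{lem:cover} then says each seeding query lies within $\rho_{\mathrm{cov}}\le\rho_{\mathrm{ret}}$ of $h$ and therefore retrieved $h$ when it was served, before it was promoted. These queries are distinct from one another. They are also distinct from the detection-phase queries, because promotions follow detection, or more carefully because the detections themselves only open the region. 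Summing the two phases gives the stated bound, with $p_{\mathrm{cover}}=1$ in the emergence regime. The absence of a latency term is then immediate: every exposure counted above is forced by the \emph{number} of observations needed, not by when they are processed. Setting update and enforcement latency to zero (instant promotion, instant revocation) leaves every counted query still served before $\kappa_S(h)\ge\tau$.

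\textbf{Main obstacle.} The delicate step is the disjointness and ordering between the $m_{\mathrm{safe}}$ detection queries and the $\lceil d/r\rceil$ seeding queries. In particular, I need to rule out a detection query being promoted itself and thereby counted twice. I would first try to settle this by modelling promotion as occurring strictly after the $m_{\mathrm{safe}}$-th detection and at subsequent observed queries. If the monitor may promote detection queries retroactively, the bound would weaken to $\max(p_{\mathrm{cover}}m_{\mathrm{safe}},\lceil d/r\rceil)$. In that case I would state the additive form under the explicit convention that promotion seeds are drawn post-detection. That convention is consistent with the sweep's slope-$1$, intercept-$=$deficit behaviour described in \S\ref{sec:model}. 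Tightness would be shown by exhibiting the schedule in which every post-detection in-region query is promoted with gain exactly $r$.
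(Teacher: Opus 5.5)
Your proposal is correct and follows the paper's proof sketch essentially step for step. Assumptions~1 and~3 force $\lceil(\tau-c_0)_+/r\rceil$ count-raising promotions, the $m_{\mathrm{safe}}$ detection-phase queries are exposures (relaxed to $p_{\mathrm{cover}}m_{\mathrm{safe}}$), Lemma~\ref{lem:cover} makes each promotion seed an exposure, and Assumption~2 gives latency-independence; the disjointness you flag is settled in the paper by exactly your convention, since it stipulates that each post-detection promotion is gated on a \emph{fresh} observed in-region query. One small caveat, shared with the paper's statement: the case $c_0\ge\tau$ is not trivially consistent with the bound, because $h$ is then rejected at insert and $E=0$ (the paper's own phase sweep reports floor $0$ there), so the $p_{\mathrm{cover}}m_{\mathrm{safe}}$ term should be read as applying only when $c_0<\tau$.
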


\begin{proof}[Proof sketch]
Containment requires $\kappa_S(h)$ to rise from $c_0$ to $\tau$, i.e. an additional
count of $(\tau - c_0)_+$. Each promotion contributes at most $r$, so at least
$\lceil(\tau - c_0)_+/r\rceil$ promotions are required. Promotion in region $R$
cannot begin until the monitor's $m_{\mathrm{safe}}$ detections fire; each such
detection is an in-region query that, under Lemma~\ref{lem:cover}, retrieves $h$
\emph{before} any sentinel exists to reject it, and is therefore an exposed
retrieval (where $p_{\mathrm{cover}}<1$ this term relaxes to
$p_{\mathrm{cover}}\,m_{\mathrm{safe}}$, as \S\ref{sec:results} measures). Each subsequent
promotion is gated on a fresh observed in-region query, which by
Lemma~\ref{lem:cover} retrieves $h$ while $\kappa_S(h) < \tau$ (the bound is thus
\emph{conditional} on Lemma~\ref{lem:cover}; we verify its hypothesis, $p_{\mathrm{cover}}=1$,
empirically below). Hence at least
$m_{\mathrm{safe}} + \lceil(\tau - c_0)_+/r\rceil$ in-region queries retrieve $h$
before containment. Finally, none of these counts depends on wall-clock latency:
even with instantaneous detection, update, and enforcement, witnesses cannot be
synthesized faster than in-region queries arrive, because a sentinel is promotable
only at an observed query location.
\end{proof}

\begin{corollary}[Latency-independence]
\label{cor:latency}
$\lim_{T_{\mathrm{update}},T_{\mathrm{enforce}}\to 0} E \ge
\lceil(\tau - c_0)_+/r\rceil > 0$ whenever $c_0 < \tau$ (the deficit term binds
regardless of $p_{\mathrm{cover}}$). The
floor is an evidence-acquisition limit, distinct from processing delay.
\end{corollary}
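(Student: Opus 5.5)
The corollary follows from Proposition~\ref{prop:floor} by dropping the nonnegative detection term and checking that nothing on the right-hand side depends on $T_{\mathrm{update}}$ or $T_{\mathrm{enforce}}$. I will fix an arbitrary pair of latencies $(T_{\mathrm{update}},T_{\mathrm{enforce}})$ and first confirm that Assumptions~1--3 do not refer to latency at all. Assumption~1 is about the rejection predicate, Assumption~2 about where sentinels may be placed, and Assumption~3 about the per-promotion increment $r$. So Proposition~\ref{prop:floor} applies verbatim for every latency pair, including the idealized case where detection, update and enforcement are instantaneous. It gives
\[
E(T_{\mathrm{update}},T_{\mathrm{enforce}}) \;\ge\; p_{\mathrm{cover}}\,m_{\mathrm{safe}} + \left\lceil \frac{(\tau-c_0)_+}{r}\right\rceil \;\ge\; \left\lceil \frac{(\tau-c_0)_+}{r}\right\rceil,
\]
where the second inequality uses $p_{\mathrm{cover}}\in[0,1]$ and $m_{\mathrm{safe}}\ge 0$. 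This step is why the conclusion holds "regardless of $p_{\mathrm{cover}}$": even if Lemma~\ref{lem:cover} fails and the detection term relaxes to $0$, the deficit term remains.

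The right-hand side is a constant independent of the latencies, so the inequality holds pointwise along any sequence $T_{\mathrm{update}},T_{\mathrm{enforce}}\to 0$. Taking $\liminf$ then gives the stated bound, which I will write with $\liminf$ to avoid assuming the limit exists. Positivity is immediate: if $c_0<\tau$, then $(\tau-c_0)_+>0$ and $r>0$, so the ceiling is at least $1$.

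The main obstacle is that the deficit term needs its own justification, independent of $p_{\mathrm{cover}}$. The proof sketch of Proposition~\ref{prop:floor} counts each promotion as an exposed retrieval only through Lemma~\ref{lem:cover}. When $p_{\mathrm{cover}}<1$, I will instead argue the deficit directly from the assumptions:
\begin{itemize}
\item by Assumptions~1 and~3, at least $\lceil(\tau-c_0)_+/r\rceil$ count-raising promotions are needed before containment;
\item by Assumption~2, each such promotion is seeded by a distinct observed in-region query $q$ within $\rho_{\mathrm{cov}}$ of $h$.
\end{itemize}
I then need each such $q$ to retrieve $h$ while $\kappa_S(h)<\tau$. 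The hard part is showing that a query which places $h$ in the sentinel's top-$k$ also has $h$ in its served top-$k$, i.e.\ that the sentinel and served rankings agree at the seeding location. I would first establish this by noting that the promoted sentinel sits at $q$'s observed location, so its top-$k$ is exactly $q$'s served top-$k$. That identifies $\rho_{\mathrm{cov}}$ with $\rho_{\mathrm{ret}}$ for count-raising seeds specifically, which is all the deficit term requires. Failing that, I would state the corollary as conditional on this identification, in the same way the proposition is stated as conditional on Lemma~\ref{lem:cover}.
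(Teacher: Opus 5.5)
Your proposal is correct and follows the paper's route. The paper reads the corollary off Proposition~\ref{prop:floor}, whose right-hand side has no latency term; its sketch closes by noting that witnesses cannot be synthesized faster than in-region queries arrive. It then discards the nonnegative $p_{\mathrm{cover}}\,m_{\mathrm{safe}}$ term and uses $c_0<\tau$ for positivity, exactly as you do, and your $\liminf$ is a harmless sharpening.

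The step you flag as the main obstacle is handled in the paper only informally, by the remark that covering a hub and retrieving it are the same $k$NN event (\S\ref{sec:coupling}). Your first route for it is likewise an added modeling assumption, not a consequence of Assumptions 1--3. The paper explicitly lets $\rho_{\mathrm{ret}}$ fall below $\rho_{\mathrm{cov}}$ in the dense regime, and fixB deliberately decouples the sentinel probe from the serving index. So your conditional fallback is the faithful version, and it mirrors the paper's own statement that the bound is conditional on Lemma~\ref{lem:cover}.
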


This is the formal content of ``coverage is not redundancy'': the monitor's
stopping condition ($\ge 1$ sentinel) is structurally weaker than the predicate's
rejection condition ($\ge \tau$ witnesses), and the gap---the missing coverage mass
$d(h)$---is paid down one observed query at a time. Maintenance cost and exposure
both scale linearly in it. Lemma~\ref{lem:cover}'s hypothesis is precisely the
\emph{emergence regime}: with the off-axis region empty of competitors a
query's served top-$k$ admits $h$ exactly when $h$ is near it, so
$\rho_{\mathrm{cov}}=\rho_{\mathrm{ret}}$ and $p_{\mathrm{cover}}=1$ (measured in
\S\ref{sec:results}). As legitimate documents accumulate $\rho_{\mathrm{ret}}$
shrinks below $\rho_{\mathrm{cov}}$, $p_{\mathrm{cover}}$ falls, and the exposure
term weakens---the density boundary of \S\ref{sec:dense}. The bound is thus a
statement about the emergence window resting on one geometric hypothesis we
\emph{verify} rather than assume.

\section{Cost--Exposure Coupling}
\label{sec:coupling}

Each of the $\lceil(\tau - c_0)_+/r\rceil$ promotions that close the gate is a
sentinel addition---one unit of \emph{churn}. By the same argument, each is also
seeded by an in-region query that, until closure, retrieves $h$---one unit of
\emph{exposure}. Covering a hub and retrieving it are the same $k$NN event.
Therefore:
\[
\underbrace{\text{churn-to-close}}_{\text{promotions}}
=
\underbrace{\text{exposure-to-close}}_{\text{exposed retrievals}}
= m_{\mathrm{safe}} + \left\lceil \frac{(\tau - c_0)_+}{r} \right\rceil.
\]

This identity holds while the region is \emph{sparse}, i.e.\ the per-observation
cover probability $p_{\mathrm{cover}}\approx 1$---the emergence window in which the
worst-case insertion places the hub. Once legitimate in-region documents accumulate,
$p_{\mathrm{cover}}$ falls and the two sides separate (churn grows as
$\sim\!d/p_{\mathrm{cover}}$ while exposure does not); we measure this boundary on
real embeddings in \S\ref{sec:dense}.

\begin{corollary}[Budget-parametric infeasibility]
\label{cor:triad}
Under Assumptions 1--3, for any churn budget
$B_{\mathrm{churn}} < m_{\mathrm{safe}} + \lceil(\tau - c_0)_+/r\rceil$ there is no
exposure budget $B_{\mathrm{exp}}$ under which a containing novelty-only maintenance
policy is feasible: closing the gate necessarily spends that much churn, and not
closing it leaves exposure to grow with continued in-region traffic.
\end{corollary}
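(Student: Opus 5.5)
The plan is to argue by contradiction, by cases on whether the policy ever reaches containment. Fix a churn budget $B_{\mathrm{churn}} < m_{\mathrm{safe}} + \lceil(\tau - c_0)_+/r\rceil$ and suppose some novelty-only policy containing $h$ stays within it together with some finite $B_{\mathrm{exp}}$. Write $N^\star = m_{\mathrm{safe}} + \lceil(\tau - c_0)_+/r\rceil$. The whole argument reuses the counting in Proposition~\ref{prop:floor}, read through the cost--exposure identity of \S\ref{sec:coupling}. That identity says every promotion needed for closure is itself a unit of churn, seeded by an observed in-region query that retrieves $h$.

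\emph{Case 1: the policy contains $h$.} By Assumption~1, containment needs $\kappa_S(h)\ge\tau$. Starting from $c_0$, Assumption~3 then forces at least $\lceil(\tau-c_0)_+/r\rceil$ promotions in region $R$. Assumption~2 rules out pre-placing sentinels, so none of these can begin before the monitor fires. I then charge the monitor's $m_{\mathrm{safe}}$ detection phase to the churn ledger as well, following the coupling identity, which counts churn-to-close as $N^\star$. Total churn is therefore at least $N^\star > B_{\mathrm{churn}}$, a contradiction.

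\emph{Case 2: the policy never closes, or stops spending churn before closure.} Then $\kappa_S(h)<\tau$ for all time, so $h$ is never rejected, even under maximal re-evaluation cadence. Every subsequent in-region query within $\rho_{\mathrm{ret}}$ of $h$ retrieves it. Under continued in-region traffic with positive rate, exposure grows without bound and eventually exceeds any finite $B_{\mathrm{exp}}$. Since $B_{\mathrm{exp}}$ was arbitrary, no exposure budget makes the policy feasible.

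The main obstacle is the $m_{\mathrm{safe}}$ term in the churn ledger. Detections are not literally sentinel promotions, so Case~1 cleanly yields only churn $\ge \lceil(\tau-c_0)_+/r\rceil$. I would first lean on the paper's accounting in \S\ref{sec:coupling}, which equates churn-to-close with $N^\star$ and thus counts the detection phase as maintenance work on the ingest path. If that is judged too loose, I would fall back to the weaker but unconditional threshold $B_{\mathrm{churn}}<\lceil(\tau-c_0)_+/r\rceil$, which follows directly from Assumptions~1 and~3, the same deficit-only form used in Corollary~\ref{cor:latency}. A secondary point is that Case~2 needs nonvanishing in-region traffic, since otherwise exposure could stay bounded; I would state this explicitly as the meaning of ``continued in-region traffic.''
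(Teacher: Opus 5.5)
Your proposal is correct and is essentially the paper's own argument: closing the gate costs churn $m_{\mathrm{safe}}+\lceil(\tau-c_0)_+/r\rceil$ via the cost--exposure identity of \S\ref{sec:coupling}, while not closing it leaves exposure growing with continued in-region traffic. The $m_{\mathrm{safe}}$ churn term you flag enters the paper in the same way, through that identity and its accounting convention (e.g.\ churn-to-close $17=5+12$ on the biomedical domain and $19=5+14$ in the synthetic suite), so your primary route matches the paper, and your fallback would prove only a weaker statement.
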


This is stronger than ``no configuration satisfied our particular budget box.'' It
states a \emph{frontier}: the deficit lower-bounds the churn required to contain,
for any exposure target. A novelty-only policy can hold churn near 1 (stop after one
cover) only by leaving exposure to accumulate; it can drive exposure to the floor
only by paying churn equal to the deficit. Within this model, there is no low-cost,
low-exposure corner.

\section{Evaluation Methodology}
\label{sec:method}

Each experiment is a control that isolates one explanation for the exposure floor from
the alternatives; thresholds, grids, and decision rules are fixed in advance. The
central idealization permits sentinel promotion \emph{only at observed query
locations}, at zero latency and cost. This is deliberately generous to the maintenance
policy: an oracle that instead pre-placed a sentinel on the unseen target would define
the observation floor away, so we exclude it, and the floor of
Proposition~\ref{prop:floor} is what survives under the fair idealization.

Each gate eliminates one alternative explanation before the next runs. The central
comparison is a three-model contrast---an \emph{oracle} (instant sufficient
coverage), \emph{continued promotion} (keep promoting past coverage), and
\emph{novelty-stopping} (stop at one cover)---which isolates the cost attributable
to stopping at coverage.

\paragraph{Setup.} Encoders: \texttt{bge-large-en-v1.5} (1024-d) and
\texttt{e5-base-v2} (768-d), unit-normalized, exact cosine $k$NN. Corpora:
BEIR-derived collections (a $\sim$100k-document reconstruction). Drift is realized
by a synthetic rotation control giving a clean monotone $\theta$ sweep. Operating
point: $\tau = 14$ (FPR $10^{-3}$) on BGE, $\tau = 18$ on E5;
$m_{\mathrm{safe}} = 5$. The worst-case insertion is a concept hub for the emergent
region.

\paragraph{Evaluation tiers.} The evaluation is deterministic. The synthetic controls
run on the $\theta$ sweep and replicate cross-family on E5; the real-embedding
experiments---the dense-region competitor-density control (\S\ref{sec:dense}) and the
real-domain three-model contrast (\S\ref{sec:results}) on the $\sim$100k-document
BGE/E5 reconstruction---are evaluated separately.

\section{Experimental Validation}
\label{sec:results}

The controls below hold on BGE and replicate identically on E5. Each isolates one
explanation for the exposure floor---the lower bound itself, the deficit versus
retrieval fragmentation, the scaling law, the coverage--redundancy ordering, and the
comparison against retrieval-time normalization---before the next.

\subsection{Deficit versus fragmentation}

The leading alternative to the deficit mechanism is \emph{retrieval fragmentation}:
that off-axis, dispersed regional queries scatter witnesses across nearby documents
so that no single target accumulates $\tau$, independent of any
coverage--redundancy mismatch. We isolate the two (Figure~\ref{fig:frag}). In the
measured geometry, the per-observation cover probability $p_{\mathrm{cover}} = 1.0$
across regional query-dispersion spreads $5^\circ$--$40^\circ$ (the in-region
query spread, distinct from the rotation $\theta$ of \S\ref{sec:phase}): where the
deficit exists (off-axis), the
region is empty of corpus competitors, so each witness covers the hub. In a
counterfactual that injects fragmentation as a free parameter
($p_{\mathrm{cover}} \in \{1.0, 0.7, 0.4\}$), the deficit slope is unchanged
(exposure stays linear in $d$ with slope 1.0); injected fragmentation shifts only the
fit scatter, never the slope. The reason
is structural: covering a hub and retrieving it
are the same $k$NN event, so dispersion changes the number of \emph{total}
observations but not the \emph{exposed-before-closure} count.

\begin{figure}[t]
\centering
\includegraphics[width=\columnwidth]{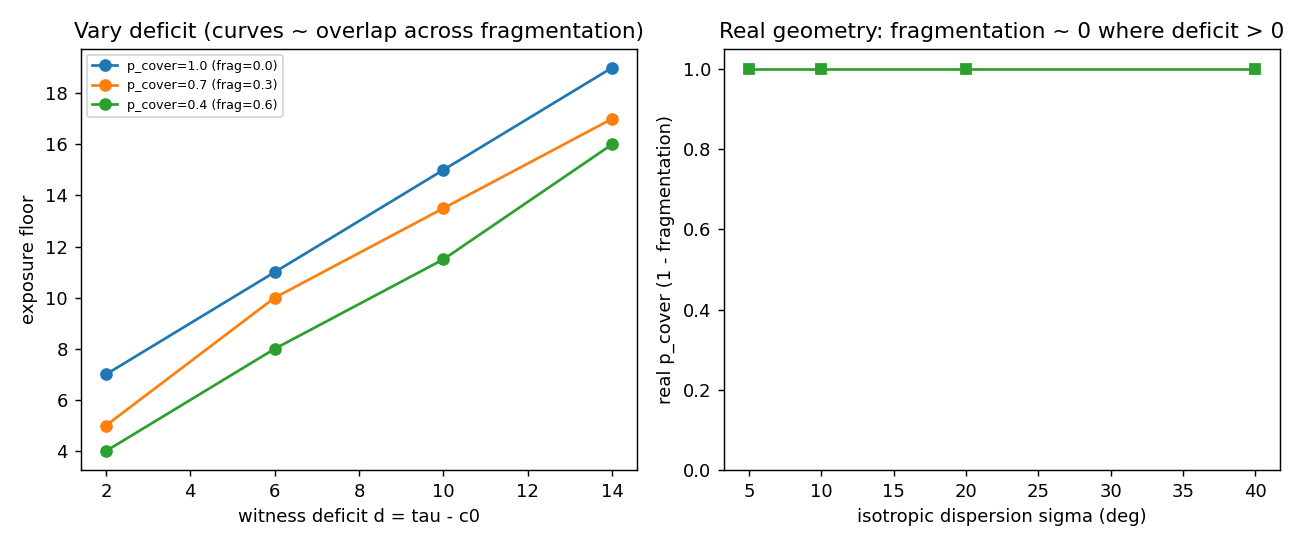}
\Description{Two panels showing exposure is linear in the witness deficit and invariant to injected fragmentation, and that real per-observation cover probability is one at all dispersions.}
\caption{Deficit versus fragmentation. Left: exposure is linear in the witness deficit
$d$ with slope 1.0, and the three fragmentation curves
($p_{\mathrm{cover}}{\in}\{1.0,0.7,0.4\}$) nearly overlap---fragmentation shifts only
a bounded detection overhead, never the floor. Right: in the \emph{sparse /
emergence regime} (off-axis region empty of competitors) $p_{\mathrm{cover}}{=}1.0$
at every dispersion, so fragmentation is structurally absent where the deficit
lives. Exposure is caused by the witness deficit, not fragmentation. (The dense
regime, where real documents make $p_{\mathrm{cover}}$ fall, is \S\ref{sec:dense}.)}
\label{fig:frag}
\end{figure}

Forcing fragmentation in as a free parameter and finding the deficit law invariant
rules it out in the sparse regime. In the dense regime (\S\ref{sec:dense}),
fragmentation \emph{is} a co-mechanism---it leaves the deficit intact but decouples
churn from exposure---so the deficit governs the emergence window, with a measured
density boundary beyond it.

\subsection{The scaling law}

Sweeping $\tau$ gives $E$ linear in the deficit with unit slope and $R^2=1.0$
(Figure~\ref{fig:law}); varying the witnesses-per-observation $r\in\{1,2,4\}$ moves the
floor inversely and exactly---median exposure $\{1{:}19,2{:}12,4{:}9\}$, fitting
$E=\lceil(\tau-c_0)_+/r\rceil+m_{\mathrm{safe}}$. Manipulating the term the bound says
controls the floor produces the predicted response.

This fit is partly definitional: the cost--exposure identity (\S\ref{sec:coupling})
makes churn and exposure the same $k$NN event, so it confirms internal consistency
rather than an independent prediction. The evidence that could have disagreed comes
from the real-embedding tests below---the four-domain floor predicted from an
\emph{independently measured} $p_{\mathrm{cover}}$ (\S\ref{sec:placement}) and the
per-region recurrence over $24$ sub-regions (\S\ref{sec:steady}).

\begin{figure}[t]
\centering
\includegraphics[width=\columnwidth]{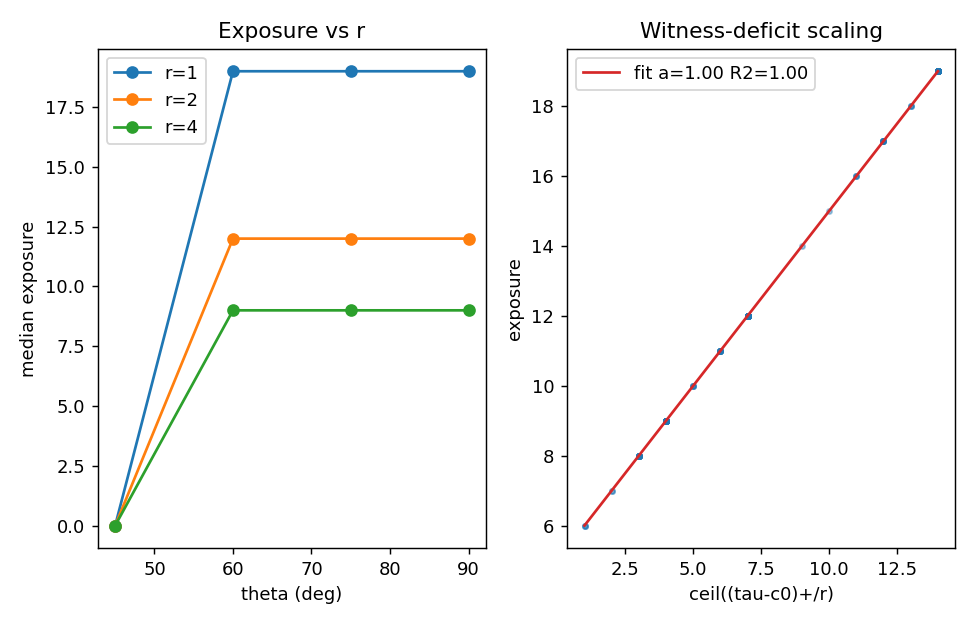}
\Description{Two panels showing median exposure decreasing with witnesses-per-observation r and collapsing onto the predicted linear law with R-squared one.}
\caption{The scaling law ($r$-sweep). Left: median exposure falls as the
witnesses-per-observation $r$ rises ($\{1{:}19,2{:}12,4{:}9\}$). Right: exposure
collapses onto $E = a\lceil(\tau-c_0)_+/r\rceil + b$ with $a{=}1.0$, $R^2{=}1.0$.}
\label{fig:law}
\end{figure}

\subsection{The coverage--redundancy ordering}

The three-model contrast yields exposure 0 (oracle) and 19 (continued
promotion)---both containment points, at which exposure stops accruing once the hub
is contained---against 248 for novelty-stopping. The 248 is not a containment
ceiling: a novelty-stopping monitor halts at one cover, below $\tau$, so it never
contains, and its exposure accrues for the lifetime of the in-region stream, growing
without bound; 248 is merely the value at this suite's run horizon (as 1000 is on the
real domain below). The gap is therefore categorical, not numerical---promoting to
redundancy contains at $m_{\mathrm{safe}}+\mathrm{deficit}$ while stopping at coverage
never does---and the oracle's 0 confirms it is the coverage--redundancy mismatch, not
an artifact.

\emph{This ordering---the paper's title---holds on real geometry, not only the
synthetic sweep.} On the real biomedical domain of \S\ref{sec:dense} (BGE,
sparse/emergence regime, real established sentinels giving $c_0 = 2$, real in-region
queries driving containment), the three models give exposure
$0$ / $17$ / $1000$ (oracle / continued / novelty). Continued promotion contains at
exactly $m_{\mathrm{safe}}+\mathrm{deficit}=17$---the law---while novelty-stopping
stalls at $c_0{+}1 = 3 < \tau$, never contains, and accrues exposure for the
lifetime of the in-region stream (here capped at $1000$ arrivals; it grows without
bound with continued traffic). The coverage--redundancy ordering and the law are
thus grounded on real embeddings, not inferred from the synthetic rotation.

\subsection{Phase structure and cross-embedding}
\label{sec:phase}

Here $\theta$ is the \emph{synthetic-rotation knob}: we rigidly rotate a concept
hub about the established query centroid to sweep the deficit through a clean,
monotone control variable. The floor vs.\ $\theta$ is $[0, 0, 0, 0, 19, 19, 19]$
for $\theta = 0^\circ \ldots 90^\circ$: illumination-safe (floor 0, $c_0 \ge \tau$)
below a rotation threshold ($\theta \le 45^\circ$ on BGE), full deficit above it
($\theta \ge 60^\circ$; Figure~\ref{fig:phase}). The same qualitative structure
holds on E5 with the threshold shifted and floor $23 = m_{\mathrm{safe}} + 18$
(vs.\ BGE $19 = m_{\mathrm{safe}} + 14$; at full rotation $c_0{=}0$ so the deficit
equals $\tau$), tracking the recalibrated $\tau$
(Table~\ref{tab:cross}); the deficit scaling replicates on both. We stress what
this is and is not: the rotation \emph{constructs} a monotone angle$\to$deficit
map purely to vary the deficit as an instrument; the specific angle thresholds are
a property of that controlled rotation, not a claim that real domains below some
angle are safe. Real coverage is anisotropic and is \emph{not} a function of angle
alone---\S\ref{sec:dense} grounds this, where a real drifted domain at only
$34^\circ$ already carries a deficit. The transferable result is the mechanism
$E=f(\mathrm{deficit})$, swept here by $\theta$ and grounded on real geometry in
\S\ref{sec:dense}; it is not the phase angles.

\begin{figure}[t]
\centering
\includegraphics[width=\columnwidth]{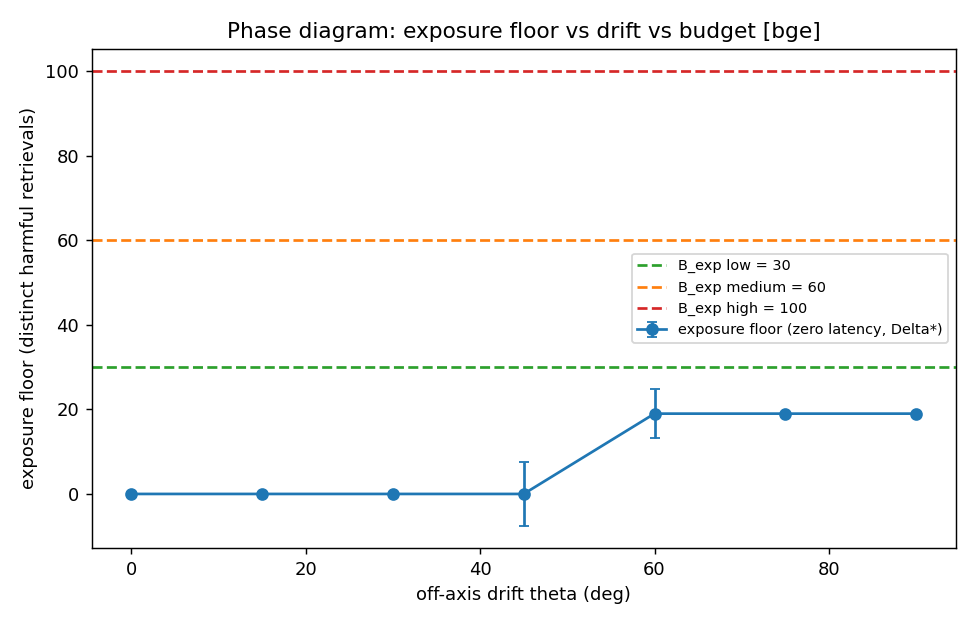}
\Description{Exposure floor versus off-axis drift angle theta rising from zero to the full deficit, shown against three placeholder budget bands.}
\caption{Phase structure (BGE). The exposure floor rises from 0 (illumination-safe,
$c_0\ge\tau$) to the full deficit as $\theta$ grows, against the three placeholder
budget bands. The transition is in the floor; whether it crosses a budget is a
calibrated-workload question (\S\ref{sec:verdict}).}
\label{fig:phase}
\end{figure}

\begin{table*}[t]
\caption{Cross-family replication. Floor $= m_{\mathrm{safe}}+$deficit; magnitude
scales with the recalibrated $\tau$ (BGE 14, E5 18).}
\label{tab:cross}
\small
\begin{tabular}{@{}lcccc@{}}
\toprule
Encoder (dim) & low-$\theta$ safe & high-$\theta$ opens & floor & deficit $R^2$ \\
\midrule
BGE-large (1024) & $\theta \le 45^\circ$ & $\theta \ge 60^\circ$ & 19 & 1.00 \\
E5-base (768) & $\theta \le 30^\circ$ & $\theta \ge 45^\circ$ & 23 & 1.00 \\
\bottomrule
\end{tabular}
\end{table*}

\subsection{Dense-region analysis}
\label{sec:dense}

The fragmentation control rules it out only where the off-axis region is empty of
competitors. We supply the missing control on real embeddings: we place the hub at
a real emergent-domain centroid (BGE; $\theta = 34^\circ$ from the established
query centroid) and let the \emph{real} in-region documents be the competitors,
sweeping the fraction present, while \emph{real} in-region queries drive
containment (Figure~\ref{fig:dense}). Note first what this real geometry says about
angle: the emergent domain sits at only $\theta = 34^\circ$ yet its initial support
is $c_0 = 2$, a deficit of $12$---a window is open at an angle the
\emph{synthetic} rotation of \S\ref{sec:phase} would still call illumination-safe
($\theta \le 45^\circ$). The two do not conflict; they measure different things.
The rotation moves a hub rigidly about the centroid, so coverage falls smoothly
with angle by construction; the real domain is a distinct cluster (biomedical
queries) sitting in a direction the established sentinels---which cluster in three
sub-domains, not a sphere---happen not to cover, so it is poorly covered despite a
moderate angle. Real coverage is anisotropic: $c_0$ is a property of \emph{where}
the domain falls relative to the sentinel clusters, not of its angle. This is
precisely why we use $\theta$ only as a controlled knob to sweep the deficit
(\S\ref{sec:phase}) and ground the real angle$\to$coverage relation here. With the
real $c_0 = 2$ the law predicts a floor of $m_{\mathrm{safe}} + 12 = 17$, which is
what we observe in the sparse regime. The result then locates a sharp regime
boundary. ``Density'' here is the fraction of the emergent domain's own documents
already present in the region (out of $25{,}000$ domain-D documents); the sweep is
that fraction. The emergence instant genuinely sits in the sparse corner: a hub
inserted when a region \emph{first} appears precedes that region's legitimate
documents---which accrue only as the new workload is served and indexed---so at
insertion the region is near-empty by construction, and the
$\le 1\%$ ($\lesssim 250$ documents) sparse band is exactly the window the threat
model occupies. While the region is sparse, $p_{\mathrm{cover}} = 1.0$ and
exposure $=$ churn $= 17$, exactly the law.
As legitimate documents accumulate, $p_{\mathrm{cover}}$ collapses
($1.0 \to 0.80 \to 0.29 \to 0.09$ at $5\%, 50\%, 100\%$) and the coupling
dissolves: churn climbs to its cap while exposure \emph{falls} (to 5 at full
density), because the real documents out-compete the adversarial hub in retrieval.

Two readings, both honest, and we state both. First, the boundary confirms the
limitation: the exposure--churn coupling and the clean law are
\emph{emergence-window} results, not universal; beyond a competitor density of a
few percent, fragmentation is a real co-mechanism and the gate is no longer the
governing factor. Second, the boundary is consistent with the worst-case insertion
(\S\ref{sec:model}), which places the hub \emph{at emergence}, when the region is
sparse---so the law applies exactly in the window the threat occupies, and the
dense regime shows the threat self-limits as the domain matures: a populated region
suppresses the hub through ordinary retrieval competition rather than through the
gate. We do not over-claim a universal coupling; we report a measured phase
boundary in competitor density. (On E5 the emergent domain sits only $17^\circ$
off-axis, so $c_0 \ge \tau$ and no window opens---an illumination-safe case
rather than a deficit case, so the dense test is informative on BGE; \S\ref{sec:placement}
explains the E5 geometry.)

\begin{figure}[t]
\centering
\includegraphics[width=\columnwidth]{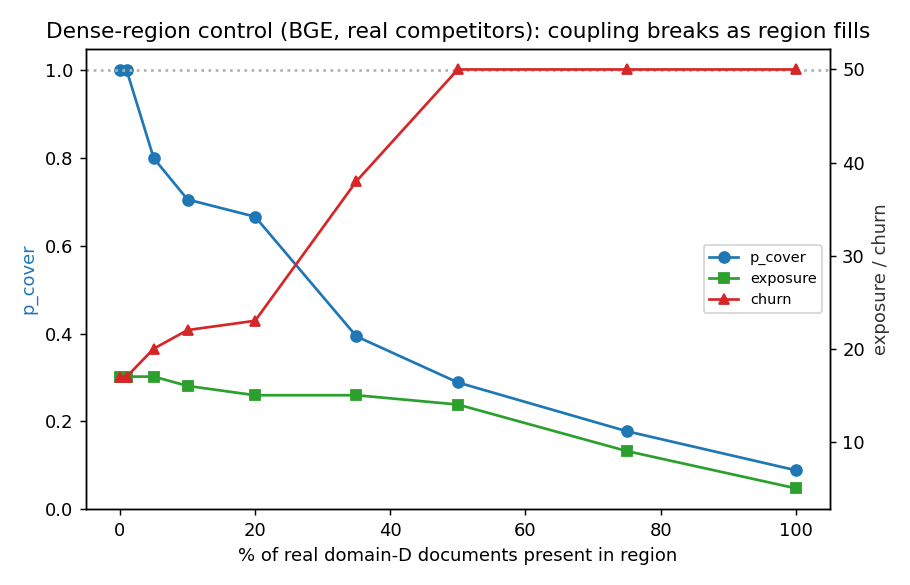}
\Description{As the fraction of real domain-D documents present in the drifted
region rises from 0 to 100 percent, the per-observation cover probability collapses
from 1.0 to 0.09, exposure falls, and churn rises to its cap; the exposure-equals-churn
coupling holds only below about one percent density.}
\caption{Dense-region control on real BGE embeddings with the real in-region
documents as competitors. The exposure--churn coupling (law) holds only in the
emergence regime ($\le 1\%$ density); as the region fills,
$p_{\mathrm{cover}}$ collapses to 0.09 and churn decouples from exposure.
Fragmentation is a real co-mechanism past the boundary.}
\label{fig:dense}
\end{figure}

\subsection{Placement contrast: admission vs.\ retrieval-time normalization}
\label{sec:placement}

The premise that gates everything is one comparison: is admission-time deletion
actually better than a retrieval-time fix that down-weights the hub's score
instead? We run that contrast fairly, on the same real biomedical substrate
(\S\ref{sec:dense}) and one fixed seeded in-region stream replayed against every
arm. The admission count gate is compared against two state-of-the-art
reference-query score normalizers---NNN~\cite{nnn} and QB-Norm~\cite{qbnorm}---each
in two configurations: \emph{static}
(reference bank = established queries, never updated---the condition under which
static normalization is reported to fail against adversarial hubs~\cite{advhub})
and \emph{online} (the bank is updated only at observed in-region query
locations, the same Assumption-2 constraint and budget the sentinel set
operates under; neither defense receives an oracle). Both placements are
calibrated to the \emph{same} benign FPR ($10^{-3}$) that fixes $\tau$, and each
normalizer is given its strongest fair configuration---the largest bias
coefficient whose benign top-1 retention stays within that budget (NNN
$\alpha{=}0.35$ at top-$m{=}30$; QB-Norm $\alpha{=}0.40$ at temperature $0.1$; both
hold benign top-1 retention $\ge 99.9\%$, the same $10^{-3}$ budget as $\tau$, so
the comparator is not hobbled). A defense
\emph{contains} at observation $n$ if the hub is absent from the served top-$k$
for all subsequent in-region queries.

To isolate the mechanism from its placement we run the count test at \emph{both}
placements: the ingest gate (one-shot deletion at admission) and a \emph{serve-time}
count gate that applies the identical $\kappa_S\!\ge\!\tau$ test per query and
suppresses the hub from the served top-$k$ once it trips. The serve-time gate is the
placement-matched comparator for the retrieval-time normalizers---both act at serve
time---so comparing them isolates \emph{count vs.\ normalization} from
\emph{ingest vs.\ serve} (Table~\ref{tab:placement}).

\begin{table}[t]
\centering
\caption{Placement contrast (emergence regime, BGE, D\_treccovid; one fixed
seeded in-region stream). Ingest $=$ one-shot deletion at admission; serve $=$
per-query suppression; normalizers are online, FPR-matched, strongest-fair. The
count test contains at the floor at \emph{both} placements; neither score normalizer
contains at either. The distinguishing axis is the mechanism (count vs.\ normalization),
not the placement (ingest vs.\ serve).}
\label{tab:placement}
\small
\begin{tabular}{@{}llcc@{}}
\toprule
mechanism & placement & exposure & contains? \\
\midrule
count $\kappa_S\!\ge\!\tau$ & ingest & 17 & yes \\
count $\kappa_S\!\ge\!\tau$ & serve  & 17 & yes \\
NNN normalizer              & serve  & 43 & no  \\
QB-Norm normalizer          & serve  & 42 & no  \\
\bottomrule
\end{tabular}
\end{table}

In the emergence regime \emph{both} placements of the count test contain at
$m_{\mathrm{safe}}+\mathrm{deficit}=17$ (ingest and serve give the identical floor,
as they accumulate the identical witnesses), while \emph{neither} normalizer ever
contains: online exposure $43$ (NNN) and $42$ (QB-Norm) across the stream, each
barely improving on its own static arm ($45$, $44$). QB-Norm fails \emph{despite}
a more aggressive fair coefficient ($\alpha{=}0.40$ vs NNN's $0.35$), so the failure
is not insufficient bias but something structural---which is the point: a query-bank
normalizer down-weights the hub \emph{and its in-region neighbours together}---they
share the in-region query mass---so the hub's \emph{rank} is preserved; admission-time
deletion isolates the hub, which is exactly why $\tau$ witnesses suffice. The
advantage is itself an emergence-window result. As in-region documents accrue, the
raw retrieval margin $\mathrm{sim}(h,q)-\mathrm{sim}_{(k)}$ crosses zero (from
$+0.09$ at the emergence instant to $-0.02$ at full density)---the same transition
the per-observation cover probability marks ($1.0\to0.09$, \S\ref{sec:dense}),
measured twice---so beyond that boundary the hub is dropped by retrieval competition
with no defense at all---neither placement governs. The contrast thus confirms the premise without
inverting the honesty of the dense-regime finding: admission is the containing
mechanism precisely in the window the threat occupies, and a fairly-maintained
retrieval-time normalizer is not. We are precise about what separates, and we
\emph{measure} it rather than assert it: the same count test $\kappa_S\!\ge\!\tau$
placed at serve time contains at exactly $17$ (Table~\ref{tab:placement}), identical
to the ingest gate, because it accumulates the very same witnesses---so the count
mechanism is placement-invariant, whereas score normalization contains at neither
placement. The distinguishing axis is therefore \emph{count versus normalization}, not
ingest-versus-serve; admission is the ingest-path placement of the count
mechanism, and the maintenance cost of \S\ref{sec:bound}--\S\ref{sec:coupling} is
the cost of \emph{that} placement. The serve-time arm's distinction is thus not its
exposure but its cost accounting, and it maps exactly onto the
$E=E_{\mathrm{obs}}+E_{\mathrm{rev}}$ decomposition of \S\ref{sec:model}: both
placements pay the identical observation-limited floor $E_{\mathrm{obs}}=17$ (the
same witnesses), and they differ only in $E_{\mathrm{rev}}$, the re-evaluation
overhead. The ingest gate evaluates the count once at admission and pays
$E_{\mathrm{rev}}$ as a coarse revocation cadence; the serve-time gate is the
maximum-cadence limit of that same term, re-evaluating the count on every query.
The serve-time arm therefore confirms, rather than complicates, the decomposition:
it is $E_{\mathrm{rev}}$ at full cadence, leaving the floor untouched.

\paragraph{Generalization across emergent domains.} Rotating which BEIR domain is
the emergent off-axis region (the other three established) at the fixed operating
point, all four open a window on BGE (deficit $11$--$13$). This is a genuine
\emph{cross-prediction}, not a back-solve. For each domain we measure
$p_{\mathrm{cover}}$ \emph{directly}---the fraction of in-region queries that
retrieve the hub $h$ in the top-$k$, the detection-phase cover quantity, computed
independently of any exposure count---then form the predicted floor
$p_{\mathrm{cover}}\,m_{\mathrm{safe}}+\mathrm{deficit}$ and compare it against the
floor \emph{measured by running the gate} on $20$ randomized arrival orders
(Table~\ref{tab:crosspred}). The independently measured $p_{\mathrm{cover}}$
predicts the observed floor to within $0.5$ in every domain; no parameter is fit to
the exposure.

\begin{table*}[t]
\centering
\caption{Four-domain cross-prediction (BGE). $p_{\mathrm{cover}}$ is measured
\emph{directly} as the in-region hub-retrieval rate, independent of any exposure
count; the predicted floor $p_{\mathrm{cover}}\,m_{\mathrm{safe}}+\mathrm{deficit}$
($m_{\mathrm{safe}}{=}5$) is then compared against the floor measured by running the
gate over $20$ randomized arrival orders. Measured $p_{\mathrm{cover}}$ predicts the
observed floor to within $0.5$ throughout---the prediction is genuine, not fit.}
\label{tab:crosspred}
\small
\begin{tabular}{lcccc}
\toprule
emergent domain & deficit & measured $p_{\mathrm{cover}}$ & predicted floor & observed floor \\
\midrule
A\_naturalquestions & 11 & 0.84 & 15.2 & $14.9\pm1.0$ \\
B\_fiqa             & 13 & 0.91 & 17.5 & $17.2\pm0.7$ \\
C\_scidocs          & 13 & 0.78 & 16.9 & $16.4\pm1.0$ \\
D\_treccovid        & 12 & 1.00 & 17.0 & $17.0\pm0.0$ \\
\bottomrule
\end{tabular}
\end{table*}

Where Lemma~\ref{lem:cover}'s hypothesis holds the prediction is exact:
D\_treccovid has measured $p_{\mathrm{cover}}=1.00$, predicting
$m_{\mathrm{safe}}+\mathrm{deficit}=17$, and the gate floor is $17.0\pm0.0$. Where
real $p_{\mathrm{cover}}<1$---a detection-phase query covers the region without
exposing $h$---the floor bends below $m_{\mathrm{safe}}+\mathrm{deficit}$ by exactly
$p_{\mathrm{cover}}\,m_{\mathrm{safe}}$, as the conditional bound predicts: e.g.\
A\_naturalquestions, measured $0.84\!\to\!$ predicts $15.2\!\to\!$ observed $14.9$.
The data thus confirm the conditional bound and its $p_{\mathrm{cover}}$ dependence,
not a violation of a bound stated without it. Across all four, \emph{neither} online
normalizer contains (NNN $45$--$73$, QB-Norm $41$--$62$ exposed). The real-embedding
evidence is thus a four-domain generalization with randomized arrival that
cross-predicts the floor from an independently measured $p_{\mathrm{cover}}$, not a
single trajectory.

\paragraph{Why E5 opens no window: the illumination-safe regime, measured.} On E5
all four domains are instead illumination-safe ($c_0\ge\tau$ at $\tau{=}18$)---no
window opens---so the real-geometry results are on BGE. This is not an idiosyncrasy
of E5 but an instance of the illumination-safe corner the phase structure predicts
(\S\ref{sec:phase}), and the geometry says why. E5 embeds the four BEIR query
domains at roughly \emph{half} the off-axis angle to the established-workload
centroid that BGE does---mean $18^\circ$ (per-domain $17$--$20^\circ$) versus BGE's
$36^\circ$ ($34$--$39^\circ$)---so each emergent region falls \emph{inside} the cone
the established sentinels' top-$k$ already illuminates, giving $c_0\ge\tau$ before
any drift. The cause is specifically this angular placement, not a globally
lower-dimensional space: the participation-ratio effective dimensionality of the two
query spaces is comparable ($87$ for E5, $96$ for BGE), so E5 is not collapsing the
geometry---it is placing these particular domains nearer the established axis. E5
therefore does not weaken the BGE result; it exhibits the no-deficit corner of the
same phase structure, where a region is so well-covered at insertion that no window
ever opens.

\subsection{Steady-state recurrence}
\label{sec:steady}
The floor is not a one-off transient. We sub-cluster the three drifted BEIR
domains into $24$ real emergent regions (k-means) against the established
workload: \emph{all} $24$ open a window ($\mathrm{deficit}>0$), with deficits
around a median of $11$ (min $1$, max $14$)---the missing-coverage mass is a
population property, not a property of one hand-picked domain. In a continuously
drifting workload each emerging region pays its own
$m_{\mathrm{safe}}+\mathrm{deficit}$ in ingest churn, so cumulative churn grows
linearly with the number of regions spawned ($19,69,127,362$ after $1,5,10,24$
regions; averaging $\approx 15$ promotions/region, higher for the early regions---the
first costs $19$, as larger deficits tend to surface first). At a fixed observation horizon the
ingest-path cost therefore scales with the workload's region-spawn rate: the floor
is paid \emph{per emerging region}, not once, which is the sense in which the
emergence window is a steady-state stream rather than a transient.

\input{section_ann_systems}

\input{section_experiments}

\section{The Operational Verdict}
\label{sec:verdict}

\paragraph{Budget-independent and established (in the emergence regime).} (i) in
the emergence window, the mechanism is the witness deficit, not
fragmentation; (ii) it obeys
$E = m_{\mathrm{safe}}+\lceil(\tau -c_0)_+/r\rceil$; (iii)
stopping at coverage is strictly and largely worse than promoting to redundancy; (iv) in that regime exposure and churn are governed by the same deficit
term (\S\ref{sec:coupling}), giving budget-parametric infeasibility; (v) all of the
above replicate across embedding families---the \emph{synthetic} mechanism
gates; the real-embedding results of (vi) and \S\ref{sec:dense}--\ref{sec:steady} are
BGE only, as E5 opens no window at these operating points; and (vi) admission-time
deletion contains the hub in this window where a fairly-maintained, FPR-matched
retrieval-time normalizer does not (\S\ref{sec:placement}). The scope is the emergence
window: \S\ref{sec:dense} locates the competitor-density boundary past which the
coupling dissolves and retrieval competition, not the gate, governs the hub.

\paragraph{Budget-dependent, by construction.} Whether the contained exposure (19 on
BGE, 23 on E5) and the churn it costs cross a given operational budget depends on the
deployment exposure budget $B_{\mathrm{exp}}$, the churn budget $B_{\mathrm{churn}}$,
and the calibrated $\tau$. On the present reconstruction the exposure floor sits
below the smallest tested $B_{\mathrm{exp}}$, while the triad binds on
$B_{\mathrm{churn}}$; E5's larger $\tau$ pushes its floor (23) close to
$B_{\mathrm{exp}} = 30$. These are the deployment quantities; the deployment verdict is
therefore a calibrated-workload question rather than a single label. It is the function
$\mathrm{verdict}(\tau, B_{\mathrm{exp}}, B_{\mathrm{churn}})$, and pinning it
requires the real workload budgets and the locked $\tau$.

\paragraph{Why the emergence window matters.} The window self-heals once legitimate
documents arrive, but it still matters for two reasons. First, \emph{emergence windows
recur}. In a continuously drifting
workload---new query regions opening, the corpus growing (\S\ref{sec:model})---the
window is not a one-off transient but a \emph{steady-state stream}: each newly
emerging region opens a fresh deficit, so the floor (and its coupled ingest churn)
is paid repeatedly, at a rate set by how fast the workload spawns new regions, not
once. We measure this directly (\S\ref{sec:steady}): across $24$ real sub-regions
all open a window and cumulative ingest churn grows linearly in the number
spawned, so at fixed time the cost scales with the spawn rate. Second, the dense regime is itself a \emph{wasted-cost} result. Past the
boundary of \S\ref{sec:dense}, churn climbs to its cap while exposure falls to $5$:
the monitor spends escalating ingest-path cost chasing containment of a hub that
retrieval competition has already neutralized. That is maintenance cost
decoupling from---and then exceeding---its protective value, exactly the kind of
ingest-path inefficiency an admission-index designer must budget for. The
emergence regime is where the gate earns its keep; the dense regime is where it
keeps spending after it has stopped helping.

\paragraph{The open axes.} It is worth being explicit about where this work is
closed and where it is open. On the \emph{mechanism} axis, the emergence window is
closed: in the sparse regime the fragmentation control rules it out and identifies the
deficit, and the law is exact and cross-family. That same axis is \emph{open in the
dense regime}: past the competitor-density boundary of \S\ref{sec:dense},
fragmentation co-governs and the gate is no longer the controlling factor---a
boundary we measure but do not push further (a closed-loop adversary that exploits
it is future work). On the \emph{regime/budget} axis, the question is open by
construction: whether the established floor is operationally significant is
undecided until $\tau$ and the budgets are deployment-calibrated. We regard this as
the honest state of the result rather than a gap to paper over---the emergence-window
mechanism is settled, its dense-regime boundary is measured, and the operational
consequence is the live question we resist resolving prematurely with reconstructed
quantities.

\paragraph{A design point the placement contrast names.} The contrast is also a
constructive critique of the defense literature, not only a negative result about
our gate's competitors. Deployed serve-time hubness defenses choose
\emph{normalization}---they down-weight a hub's score---when, by
\S\ref{sec:placement}, a serve-time \emph{count} test ($\kappa_S\ge\tau$ applied at
retrieval) \emph{does} contain the very hubs normalization cannot---measured at
exposure $17$ in Table~\ref{tab:placement}---because a count isolates the hub where a
shared-query-bank bias cannot. The unexplored design point is thus a serve-time count
gate; it pays the identical observation-limited floor $E_{\mathrm{obs}}$, spending its
re-evaluation overhead $E_{\mathrm{rev}}$ as per-query suppression rather than as
ingest-time churn. Choosing between the two placements of the count is therefore a
cost-allocation decision over $E_{\mathrm{rev}}$, while choosing normalization over
the count at either placement is, in the emergence window, simply the wrong
mechanism.

\section{Related Work}

\paragraph{Database systems and incremental maintenance.} The admission check is a
query-aware auxiliary index, and our cost lens is the database one. Vector stores
serve top-$k$ retrieval over ANN indexes---HNSW~\cite{hnsw}, product quantization
and inverted files~\cite{pq}, and GPU- or disk-resident
variants~\cite{faiss,diskann,scann}---in purpose-built systems~\cite{milvus}. A
growing line keeps these indexes correct \emph{online}, and it is the work closest to
ours in spirit. FreshDiskANN~\cite{freshdiskann} supports real-time inserts and
deletes on a disk-resident graph index, preserving recall through a streaming merge
of an in-memory delta into the on-disk graph; SPFresh~\cite{spfresh} applies in-place,
locally rebalanced updates to inverted posting lists so a partitioned index stays
fresh without global rebuilds; DeDrift~\cite{dedrift} targets \emph{distribution
drift} directly, periodically re-fitting IVF centroids as the indexed data shifts
beneath them; and adaptive IVF maintenance~\cite{adaivf} re-tunes partitioning under
streaming updates. The sentinel set is likewise an incrementally maintained view, so
classical incremental view maintenance~\cite{ivmsurvey} with its
higher-order~\cite{dbtoaster} and dataflow~\cite{diffdataflow} descendants, adaptive
self-tuning indexing~\cite{cracking}, stream drift detection~\cite{adwin}, and
incremental reverse-$k$NN~\cite{rknn} are its closest analogues. Every one of these
systems, however, is \emph{handed its deltas}: FreshDiskANN and SPFresh are given the
inserted and deleted vectors, DeDrift is given the drifted data whose centroids it
re-fits, and IVM is given the base-table updates whose effect it propagates. Their
problem is to apply a \emph{known} delta at low latency and cost while preserving
recall. The constraint that makes our floor non-trivial sits upstream of all of them:
the update that closes the gate---a sentinel covering the emerging region---cannot be
\emph{computed} from the arrived document, because it does not exist until an
in-region query is \emph{observed}. The maintenance is therefore
observation-limited, and even a maintenance engine with zero update latency and zero
rebuild cost cannot pre-empt the exposure floor, because the floor counts how many
query observations must accrue, not how fast each is applied
(Corollary~\ref{cor:latency}). That gap---between an observation-limited acquisition
bound and the delta-driven maintenance machinery---is our contribution, not the
machinery itself; Table~\ref{tab:related} summarizes the contrast.

\begin{table}[t]
\centering
\caption{This work versus online index-maintenance systems. Prior systems apply a
\emph{known} delta at low cost while preserving recall; none models the
observation-limited acquisition cost of an ingest-time admission index under query
drift, nor the exposure it leaves open.}
\label{tab:related}
\small
\begin{tabular}{@{}lcccc@{}}
\toprule
system & online & query & admission & cost \\
       & maint. & drift & index     & model \\
\midrule
FreshDiskANN~\cite{freshdiskann} & \checkmark & -- & -- & -- \\
SPFresh~\cite{spfresh}           & \checkmark & -- & -- & -- \\
DeDrift~\cite{dedrift}           & \checkmark & \checkmark & -- & -- \\
adaptive IVF~\cite{adaivf}       & \checkmark & partial & -- & -- \\
IVM~\cite{ivmsurvey}             & \checkmark & -- & -- & delta \\
\midrule
\textbf{this work} & \checkmark & \checkmark & \checkmark & \textbf{obs.-limited} \\
\bottomrule
\end{tabular}
\end{table}

\paragraph{Hubness and reference-query normalization.} Hubness in high-dimensional
nearest-neighbour search is well studied~\cite{hubness}, with a long line of
reduction methods---mutual proximity and local scaling~\cite{mutualproximity},
surveyed comprehensively in~\cite{hubsurvey}. The reference-query normalization
family---inverted softmax~\cite{invsoftmax}, NNN~\cite{nnn}, QB-Norm~\cite{qbnorm},
DBNorm~\cite{dbnorm}---and training-time balancing (NeighborRetr~\cite{neighborretr})
attack hubness on the \emph{retrieval-accuracy} axis, and they share an implicit
\emph{placement} commitment: hubness is corrected by re-scoring at query time, so a
hub is suppressed by subtracting a popularity bias estimated against a reference-query
bank. We share the maintained-reference-query structure but make two departures.
First, our axis is enforcement under an adversary with an exposure clock, not
retrieval quality, and the contribution is the coverage--redundancy bound. Second,
and more pointedly for this family, \S\ref{sec:placement} shows that the placement
commitment is the wrong one for adversarial \emph{defense}: running the count test at
both the ingest and serve placements and the strongest fair NNN and QB-Norm
normalizers at the serve placement, the count test contains the hub at either
placement (exposure $17$) while normalization contains at neither ($42$--$43$). The
reason is structural---a query-bank normalizer down-weights the hub \emph{and} its
in-region neighbours together, because they share the in-region query mass, so the
hub's \emph{rank} is preserved no matter how aggressively the bias is tuned within a
fair benign-FPR budget. The distinguishing axis for hub containment is therefore the
mechanism---a count of distinct witnessing queries---not the placement at which a
score is normalized. This reframes retrieval-time normalization as the wrong tool for
adversarial hub containment, independent of how well its reference bank is
maintained.

\paragraph{Adaptive defenses under attack.} That adaptive filters are exploitable by
dynamic attackers is established outside our setting (online-filter poisoning;
Stackelberg-game robust learning; attacks on test-time
adaptation)~\cite{onlinepoison,stackelberg}. We contribute a geometry-specific,
latency-independent lower bound and a churn coupling, rather than a generic
exploitability phenomenon.

\paragraph{Adversarial hubs and corpus poisoning.} The hub threat itself is prior
art---adversarial multi-modal hubs~\cite{advhub}, knowledge-base and corpus
poisoning of retrieval-augmented systems~\cite{poisonedrag,corpuspoison}; we treat
it as the premise, not a contribution. Two recent papers study the
security reading of this same sentinel admission gate directly---global gating
against coordinated poisoning~\cite{pathakGating} and the containment limits of the
sentinel count under poisoning~\cite{pathakContainment}. The present paper is
distinct and complementary: it takes the gate as given and quantifies
the \emph{ingest-path maintenance cost} of keeping it sound as the query workload
drifts---a data-management question---rather than adversarial containment
guarantees. The exposure lower bound, the approximate-index failure mode, and the
maintenance-churn coupling studied here do not appear in that security work.

\section{Limitations and Threats to Validity}
\label{sec:limits}

Calibration is on a reconstruction; $\tau$ is not locked to a separate stationary
study, and $B_{\mathrm{exp}}, B_{\mathrm{churn}}$ are placeholders---hence the
deliberate refusal to issue an operational verdict. Drift uses a synthetic rotation
control for a clean $\theta$ sweep; the real-embedding phase structure is
corroborated but the magnitude-vs-budget question awaits deployment calibration.
Fragmentation is falsified as causal within the sparse regime (off-axis regions
empty of competitors); the dense boundary, where competitors are in-region, is now
tested directly (\S\ref{sec:dense}) rather than deferred, and there fragmentation is
a real co-mechanism---so the coupling and the clean law are scoped to the emergence
window, not claimed universally. The triad search
is finite, but Corollary~\ref{cor:triad} supplies the structural reason the frontier
holds beyond the grid. A closed-loop adversary that parks in the under-covered zone,
shifts region after each promotion, or weaponizes a redundancy-aware monitor's churn
budget is left to future work; the present bound is generous to the defender (it
assumes continued promotion), so the realistic coupled interaction can only enlarge
the floor.

\section{Conclusion}

Keeping a query-aware admission index sound under workload drift imposes an
ingest-path maintenance cost that online index-maintenance systems do not model. The
monitor's stopping condition---coverage by one sentinel---is weaker than the
predicate's rejection condition of $\tau$ witnesses, and closing that gap is
observation-limited: each witness requires a fresh in-region query, so the exposure
floor $m_{\mathrm{safe}}+\lceil(\tau-c_0)_+/r\rceil$ is independent of update and
enforcement latency, and the same promotions that close the gap are the events that
expose the region. On real ANN indexes the floor is a best case---approximate
retrieval inflates it, and below a recall threshold the gate fails to contain until the
monitor counts witnesses on a higher-recall probe. The check itself is a fixed
$O(|S|d)$ cost, negligible beside the insert it rides on; the cost that scales is
exposure, and the bound that predicts it also provisions against it. Coverage is not
redundancy: locally sufficient maintenance decisions compose into a globally
unaccounted ingest-path cost wherever a coverage-driven maintenance layer feeds a
threshold predicate.

\appendix
\section{Additional Synthetic Controls}

\subsection{Distribution and budget sensitivity}

Two claims of different kinds. \emph{Budget-independent:} the effect is a population
property, not a few pathological hubs---the top 5\% of hubs hold only 8\% of the
total deficit (median deficit $d = 14$ over this per-hub population, distinct from
the $24$ coarser spawned sub-regions of \S\ref{sec:steady}, whose median is $11$).
\emph{Budget-dependent:} sweeping the
placeholder low budget $\pm 50\%$, the fixed floor (an independent quantity) crosses
that budget only at $0.63\times$ ($-37\%$) and is stable to $\pm 30\%$---a smooth
single-threshold crossing, not a knife-edge flip. The crossing is a statement about
where a chosen budget falls relative to the floor, not about the floor itself; the
budget here is a placeholder, so this characterizes sensitivity, not a verdict.

\subsection{Budget feasibility}

We separate an empirical claim from a structural one, because they have different
scope. \emph{Empirically (regime-scoped):} on the present calibration no
configuration meets exposure $\le 30$, FPR $\le 10^{-2}$, and churn $\le 5$
simultaneously; the binding constraint is churn---continued promotion meets exposure
(19) and FPR but costs 19 witnesses ($> 5$), while novelty-stopping holds churn at 1
but never contains, so exposure grows without bound (248 at the run horizon). This is
a statement about the observed/calibrated
regime, not a universal one, and we state it as such. \emph{Structurally
(budget-parametric):} Corollary~\ref{cor:triad} is stronger and does not depend on
the particular budgets. Because the witnesses that close the gate are the
promotions, the deficit lower-bounds churn-to-close; hence for any
$B_{\mathrm{churn}} < m_{\mathrm{safe}} + \lceil(\tau - c_0)_+/r\rceil$, no
$B_{\mathrm{exp}}$ admits a containing novelty-only configuration. The empirical ``0
feasible configs'' is one point on this frontier; the frontier itself is the general
result.

\subsection{Adaptive versus static maintenance}

Against a never-adapting baseline, adaptation shifts the exposure/FPR frontier
inward by $\sim 232$ at matched FPR. We frame this as a frontier shift at a churn
cost, not as domination: at low $\theta$ static is already optimal
($c_0 \ge \tau$, floor 0), and the high-$\theta$ gain is the same
continued-vs-novelty quantity ($248 - 19 \approx 232$) viewed against no adaptation.
Adaptation moves the operating point along the exposure/churn frontier; it does not
provide a free lunch.

\section*{Artifacts}
All code and result data for this paper are publicly available at
\url{https://github.com/p23pathak/query_staleness_edbt_2027}. The repository contains the
experiment harness---the admission-gate and recall-aware mitigation runs on real
HNSW/IVF/IVF-PQ indexes, the Wikipedia-clickstream drift replay, the
PostgreSQL/\texttt{pgvector} prototype, and the ingest-cost, scaling, provisioning, and
cross-encoder measurements---together with the result file behind every table and figure,
so the reported numbers can be inspected directly without re-running. A \textsc{README}
documents the software environment and the command-to-result mapping. The public
MS~MARCO, TREC-COVID, and Wikipedia-clickstream corpora are not redistributed; the
repository documents how to download them and regenerate the \texttt{bge-large}
embeddings. No credentials or private data are required to run the analysis.

%% file: section_ann_systems.tex
\section{The Floor on Real Vector Indexes}
\label{sec:ann}

Every result so far computes the admission predicate with an \emph{exact}
reverse-$k$NN: $\kappa_S(d)$ is evaluated by brute-force top-$k$ over the corpus.
A production vector store does not serve exact top-$k$; it serves an
\emph{approximate} nearest-neighbour index (HNSW, IVF, or a quantized IVF-PQ) whose
recall is below one and is itself a tuning knob. Because both the served retrieval
\emph{and} the sentinel witness count are then read off that same approximate index,
the observation-limited floor of \S\ref{sec:bound} is a statement about the exact
retriever, and the operative question for a data-management system is whether it
survives approximate retrieval. We answer it directly on a real index at production
scale.

\paragraph{Setup.} We build the established store from the MS MARCO passage corpus
($8{,}841{,}823$ passages) encoded with \texttt{bge-large-en-v1.5} (1024-d,
unit-normalized), with the biomedical TREC-COVID domain as the emergent off-axis
region (hub $=$ its centroid; established sentinels drawn from the MS MARCO query
workload). Over the $\sim$8.8M-vector pool we build three indexes spanning the
deployment spectrum---HNSW (graph), IVF-Flat (inverted list), and IVF-PQ (the
memory-compressed index used at billion scale)---and sweep each one's recall knob
(HNSW \texttt{efSearch}, IVF \texttt{nprobe}). For every configuration we measure
recall@$k$ against exact GPU ground truth and then run the \emph{identical}
continued-promotion containment simulation of \S\ref{sec:results} through that index,
recording the churn to contain, the \emph{served} exposure (in-region queries for
which the index returns the hub), and the \emph{true} exposure (queries that retrieve
the hub under exact ground truth---the real risk, part of which a low-recall index
silently drops). Operating point $\tau=14$, $m_{\mathrm{safe}}=5$, $k=10$; the hub
sits at $c_0=0$, deficit $14$, so the analytical floor is
$m_{\mathrm{safe}}+\mathrm{deficit}=19$.

\begin{table}[t]
\centering
\caption{The witness-deficit floor on real ANN indexes over 8.8M MS MARCO vectors
(BGE-large; emergent TREC-COVID; $\tau{=}14$, deficit $14$, analytical floor $19$).
As index recall falls, churn-to-contain and \emph{true} exposure rise above the floor,
and below recall $\approx0.5$ the monitor fails to contain at all. IVF-PQ is
recall-capped by quantization.}
\label{tab:ann}
\small
\begin{tabular}{@{}llccccc@{}}
\toprule
index & knob & recall & served & true & churn & contains \\
\midrule
exact       & ---            & 1.000 & 17.5 & 17.5 & 24.5 & 1.00 \\
\midrule
HNSW        & efS$=256$      & 0.966 & 17.5 & 17.5 & 24.5 & 1.00 \\
HNSW        & efS$=128$      & 0.912 & 17.5 & 19.0 & 27.0 & 1.00 \\
HNSW        & efS$=64$       & 0.838 & 16.5 & 21.0 & 31.5 & 1.00 \\
HNSW        & efS$=16$       & 0.636 & 16.5 & 31.5 & 45.5 & 1.00 \\
HNSW        & efS$=8$        & 0.474 & 14.0 & 35.0 & 50.0 & \textbf{0.25} \\
\midrule
IVF-Flat    & nprobe$=128$   & 0.900 & 17.5 & 17.5 & 24.5 & 1.00 \\
IVF-Flat    & nprobe$=32$    & 0.742 & 17.0 & 19.0 & 27.0 & 1.00 \\
IVF-Flat    & nprobe$=8$     & 0.554 & 15.5 & 26.5 & 36.0 & 1.00 \\
IVF-Flat    & nprobe$=4$     & 0.456 &  8.0 & 35.0 & 50.0 & \textbf{0.00} \\
IVF-Flat    & nprobe$=1$     & 0.258 &  2.0 & 35.0 & 50.0 & \textbf{0.00} \\
\midrule
IVF-PQ      & nprobe$=64$    & 0.308 & 16.0 & 23.5 & 34.5 & 1.00 \\
IVF-PQ      & nprobe$=8$     & 0.280 & 15.0 & 29.0 & 41.0 & 1.00 \\
\bottomrule
\end{tabular}
\end{table}

\paragraph{The exact floor is a best case (real competitors already inflate it).}
Even at recall $1.0$ the measured churn is $24.5$, above the $19$ floor, because at
8.8M the real corpus contains near-region competitors: the per-observation cover
probability is $p_{\mathrm{cover}}=0.72<1$, so containment needs
$\mathrm{deficit}/p_{\mathrm{cover}}\approx19.4$ promotions rather than $14$. This is
the dense-regime effect of \S\ref{sec:dense} arising \emph{naturally} from a real
background rather than from injected competitors, and it makes
$m_{\mathrm{safe}}+\mathrm{deficit}$ a genuine lower bound that a real store sits
above.

\paragraph{Approximation inflates the floor monotonically.} Reducing recall raises
both churn-to-contain and true exposure monotonically
(Table~\ref{tab:ann}, Fig.~\ref{fig:ann}): at a usable HNSW recall of $0.64$ the
monitor still contains but pays churn $45.5$ ($1.9\times$ the exact $24.5$) and leaves
true exposure $31.5$ ($1.8\times$). The mechanism is that an approximate sentinel
retrieval covers the hub less often ($p_{\mathrm{cover}}$ falls with recall), so each
promotion is a weaker witness and more promotions---hence more in-region
arrivals---are needed. The floor is thus not merely a function of the deficit; it is a
function of the deficit \emph{and} the index recall.

\paragraph{A containment-failure threshold invisible to the exact analysis.} Below
recall $\approx0.5$ the monitor can stop containing the hub \emph{entirely}: HNSW at
\texttt{efSearch}$=8$ contains in only $25\%$ of arrival orders, and IVF-Flat at
\texttt{nprobe}$\le4$ \emph{never} contains within the stream. The extreme
(\texttt{nprobe}$=1$, recall $0.26$) is the starkest case: the index returns
the hub for only $2$ in-region queries---so the monitor accumulates almost no
witnesses---while exact retrieval returns it for $35$, and containment never fires.
A low-recall index therefore does not just cost more; it can convert a contained
threat into an uncontained one that the admission monitor cannot even see, a failure
mode the exact-retriever analysis of \S\ref{sec:bound}--\S\ref{sec:coupling} cannot
express.

\paragraph{The compressed index is the worst case.} IVF-PQ---the index actually
deployed when the corpus does not fit in RAM---is recall-capped by quantization at
$\approx0.31$ regardless of \texttt{nprobe}, so it pays a \emph{permanent}
$\sim\!1.5\times$ churn penalty (34.5 vs 24.5) and $\sim\!1.3\times$ true exposure.
Unlike low-\texttt{nprobe} IVF-Flat it still contains, because its recall loss is
spread evenly across queries (keeping $p_{\mathrm{cover}}\approx0.45$) rather than
concentrated. The ordering across index families---graph $\succ$ inverted-list
$\succ$ quantized---is exactly the memory/recall trade-off a systems designer picks
along, now priced in admission-floor terms.

\begin{figure}[t]
\centering
\includegraphics[width=\columnwidth]{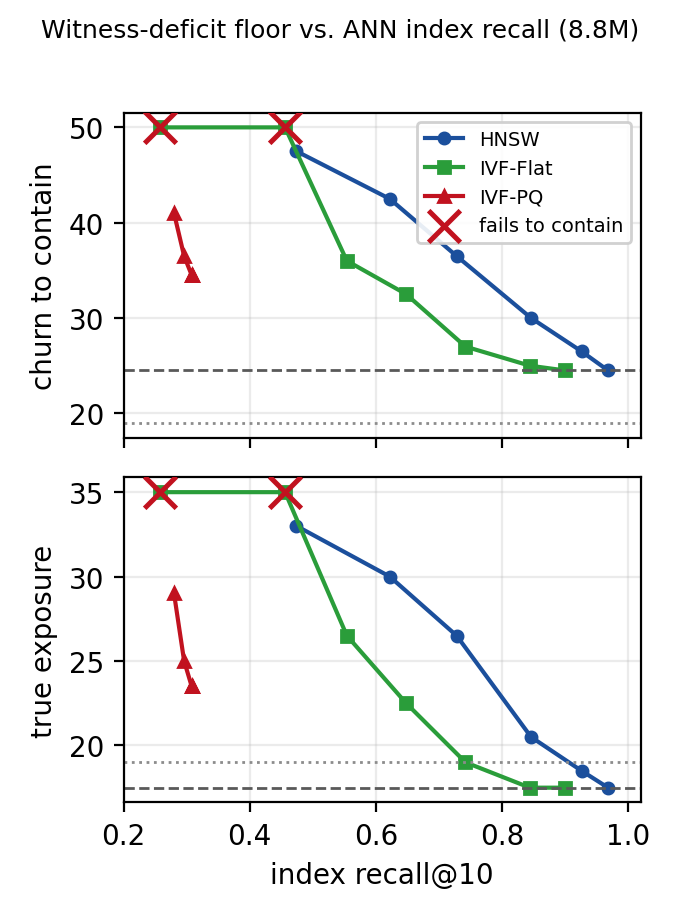}
\Description{Churn-to-contain and true exposure rise as index recall falls, across
HNSW, IVF-Flat, and IVF-PQ, with a shaded region below recall 0.5 where the monitor
fails to contain.}
\caption{The floor on real 8.8M-vector indexes. As recall@10 falls (right to left),
churn-to-contain (top) and true exposure (bottom) rise above the exact values
(dashed) and the analytical floor $19$ (dotted). $\times$-marked configurations fail
to contain the hub within the stream (HNSW \texttt{efSearch}$=8$; IVF-Flat
\texttt{nprobe}$\le4$). IVF-PQ (recall $\approx0.31$, quantization cap) still contains
despite low recall---its recall loss is spread across queries rather than concentrated
on a few---so containment failure is not a clean recall threshold.}
\label{fig:ann}
\end{figure}

\subsection{The ingest-path cost, measured}
\label{sec:ann_ingest}

The maintenance argument of \S\ref{sec:bound} claims a cost on the ingest path; we
measure it on the real store. Insertion is the CPU HNSW \texttt{add} (as in every
production HNSW deployment); the admission check is the reverse-$k$NN count
$\kappa_S(d)$, computed as one $(|S|\times d)$ matrix--vector product against the
sentinel thresholds followed by a compare---no per-sentinel re-search.

\paragraph{Admission is a fixed, corpus-independent cost, dominated at scale by the
insert.} Over the 8.8M store, a single HNSW insert costs $\approx47$\,ms (median
$46.5$, p99 $51$; $\approx21$ docs/s, warmup discarded), while the $\kappa_S$ check
costs $0.04$\,ms at $|S|=5000$ sentinels---a relative ingest tax of $0.08\%$.
Crucially the check is $O(|S|\cdot d)$ and
\emph{independent of corpus size}: it scales linearly in the sentinel budget
($0.02$\,ms at $|S|{=}1000$ to $0.22$\,ms at $|S|{=}20000$) but not in $N$, whereas
the ANN insert grows with $N$. This reconciles scale regimes---at a $100$k store,
where an insert is $\approx1$\,ms, the same check is a $\sim\!37\%$ tax, but as the
store grows to production scale the insert dominates and the admission overhead
becomes negligible. The maintenance cost that matters at scale is therefore
\emph{not} the CPU cost of the check---that is nearly free---but the \emph{exposure
and churn} the approximate index imposes (\S\ref{sec:ann}). Revocation re-evaluation
scales linearly with cadence ($\approx4.9$/$13.9$/$35$\,ms per $100$/$500$/$2000$
admitted documents at $|S|=5000$), so a coarser cadence trades $E_{\mathrm{rev}}$
against staleness exactly as the $E=E_{\mathrm{obs}}+E_{\mathrm{rev}}$ decomposition
of \S\ref{sec:model} predicts.

\subsection{Restoring Containment: A Recall-Aware Fix}
\label{sec:mitigation}

The silent blind spot is a deployment bug with a cheap fix, and identifying it points
straight at the correction. Its cause is that the monitor counts witnesses through the
same low-recall serving index that misses the hub; the two roles---serving traffic and
counting admission witnesses---need not share a recall. Two corrections decouple them
(Fig.~\ref{fig:ann_mitigation}). \emph{fixA} lowers the trip threshold in proportion to
the measured recall, $\tau_{\mathrm{eff}}=\lceil\tau\cdot\mathrm{recall}\rceil$,
compensating the approximate under-count. \emph{fixB} counts witnesses on a
\emph{high-recall probe} decoupled from the serving index---exact for the $|S|$
sentinels only, which is affordable precisely because the monitor probes a fixed
sentinel set rather than the whole corpus, while serving stays approximate for speed.

On the 8.8M store, \emph{fixB restores containment to $1.0$ and pins true exposure to
the exact floor $17.5$ at every recall operating point}---including the three IVF-Flat
configurations ($\mathrm{nprobe}\le4$) where the baseline never contains and the HNSW
$\mathrm{efSearch}{=}8$ point where it contains only half the time. fixA is a cheaper
partial fix: it restores containment except at the most degenerate recall
($\mathrm{nprobe}\le2$, recall $\le0.36$), where a recall-corrected threshold cannot
overcome the concentrated recall loss. The design rule is concrete---an operator running
an approximate index should count admission witnesses on a higher-recall probe than the
one serving traffic; the cost is exact scoring of a fixed sentinel budget, not of the
corpus.

\begin{figure}[t]
\centering
\includegraphics[width=0.78\columnwidth]{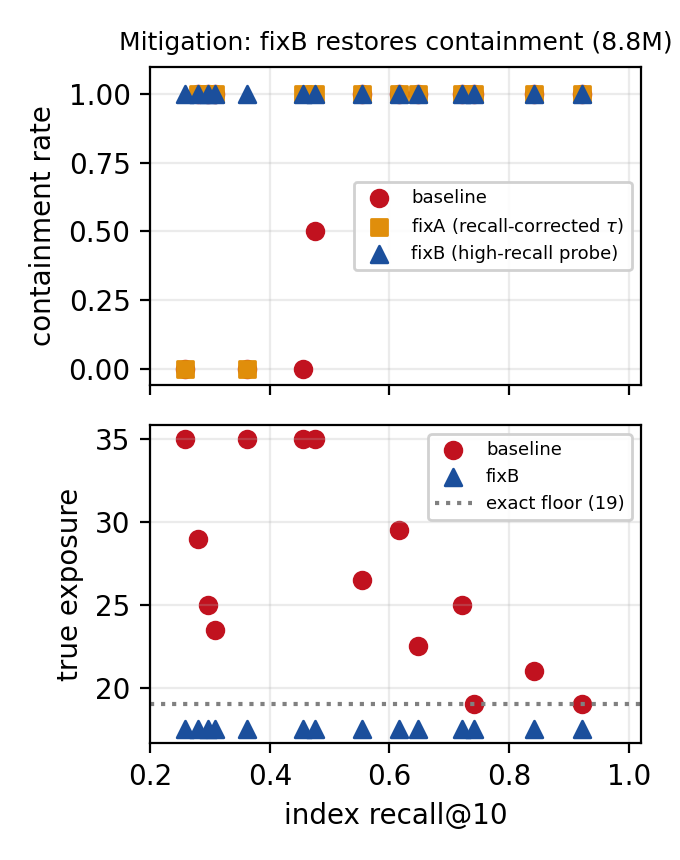}
\Description{Baseline containment collapses at low recall while fixB restores it to one and pins true exposure to the floor across all operating points.}
\caption{Restoring containment (8.8M). Top: containment rate vs.\ index recall---the
baseline (monitor on the serving index) collapses at low recall, fixB (high-recall
probe) holds at $1.0$ everywhere, fixA is partial. Bottom: fixB pins true exposure to
the exact floor ($17.5$) across all operating points where the baseline leaks up to
$35$.}
\label{fig:ann_mitigation}
\end{figure}

%% file: section_experiments.tex
\section{Real Drift, a Prototype, and Provisioning}
\label{sec:experiments}

We close by answering three questions a deployment audience asks of the analysis: does
the law hold under \emph{real} drift rather than a synthetic rotation; what does the
mechanism cost inside an \emph{actual} vector database; and how does one \emph{use} the
bound to configure a system. We also confirm the findings are not embedding-specific.

\subsection{The Law Under Real Temporal Drift}
\label{sec:drift}

The controlled rotation of \S\ref{sec:phase} sweeps the deficit cleanly but is
synthetic. To test the law under organic drift we replay the Wikipedia
\emph{clickstream}---monthly aggregated reader navigation---across the COVID-19
emergence: the established workload is the November--December 2019 top query pages and
the drift period is March--April 2020, when a genuinely novel topic cluster (the
pandemic) appears that the pre-2020 workload cannot cover. Clustering the queries and
taking the established-period pages as sentinels, five regions emerge carrying a
deficit against those sentinels (deficits $2$--$11$, median $5$)---real under-covered
emerging topics, not a constructed angle. (Six \emph{consecutive} 2023 months yield
\emph{no} emerging region---popular topics are semantically stable month to month---so
the shift must be genuine to open a window.)

This is the test the synthetic sweep cannot provide: the floor
$m_{\mathrm{safe}}+\mathrm{deficit}$ is predicted per region from each region's measured
deficit, then checked against the exposure actually incurred when the five regions are
replayed in their true monthly arrival order (Fig.~\ref{fig:drift_law})---an
out-of-sample prediction that could have deviated but does not. Measured exposure matches
the predicted floor region by region (slope $1.0$, $R^2=1.0$; exposure equal to churn,
median difference $0$). The law and the cost--exposure coupling therefore hold unchanged
under a real distribution shift, not only the synthetic rotation.

\begin{figure}[t]
\centering
\includegraphics[width=0.72\columnwidth]{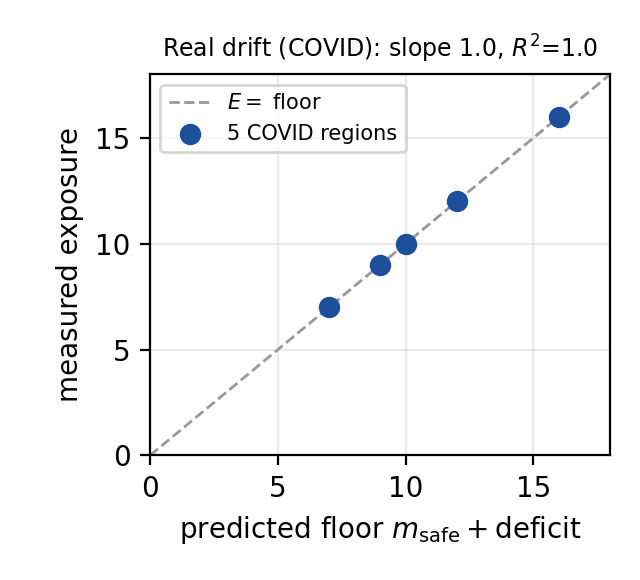}
\Description{Measured exposure equals the predicted floor across five COVID-emergent regions, slope one, R-squared one.}
\caption{The exposure law under real temporal drift. Replaying the Wikipedia
clickstream across the COVID-19 emergence (Nov--Dec 2019 $\to$ Mar--Apr 2020), the five
organically-emerging regions obey $E=m_{\mathrm{safe}}+\mathrm{deficit}$ exactly
(slope $1.0$, $R^2{=}1.0$).}
\label{fig:drift_law}
\end{figure}

\subsection{A Real Vector-Database Prototype (pgvector)}
\label{sec:pgvector}

\begin{figure}[t]
\centering
\footnotesize
\fbox{\shortstack{inserted\\document}}\,$\rightarrow$\,%
\fbox{\shortstack{admission hook\\$\kappa_S(d)\!\ge\!\tau$? reject}}\,$\rightarrow$\,%
\fbox{\shortstack{vector store\\(HNSW / IVF)}}

\vspace{2mm}
\fbox{\shortstack{query\\stream}}\,$\rightarrow$\,\fbox{monitor}\,%
$\xrightarrow{\text{promote}}$\,\fbox{sentinel index $S$}\;$\uparrow$\,(feeds the hook)
\Description{Block diagram of the ingestion path: an inserted document passes through an
admission hook that rejects it when its sentinel reverse-kNN count reaches tau, before
entering the HNSW or IVF vector store; separately, the query stream feeds a monitor that
promotes sentinels into the sentinel index, which in turn feeds the admission hook.}
\caption{The admission gate as an ingestion-path hook. Each inserted document is scored
($\kappa_S$) against the sentinel index before entering the store; a monitor maintains
that index by promoting sentinels from the drifting query stream.}
\label{fig:arch}
\end{figure}

To measure the mechanism in a real system rather than a simulation, we implement the
admission gate as an ingestion-path hook (Fig.~\ref{fig:arch}) in PostgreSQL/\texttt{pgvector}: a $500$k-vector
store behind a pgvector HNSW index, with $\kappa_S$ evaluated at insert against sentinel
thresholds precomputed from real pgvector top-$k$ queries. Measured end to end, a
pgvector insert costs $11.6$\,ms (median; $86$ docs/s) and the $\kappa_S$ admission
check adds $0.038$\,ms---a \textbf{$0.33\%$ ingest tax} at $|S|=5000$ sentinels. This
matches the analysis (\S\ref{sec:ann_ingest}): the check is $O(|S|\,d)$ and
corpus-independent, so its relative cost is set by how expensive the underlying insert
is---$0.07\%$ of a $46$\,ms 8.8M-faiss insert, $0.33\%$ of an $11.6$\,ms pgvector
insert. In both a research index and a production vector database the admission check is
nearly free; the binding cost is the exposure the approximate index imposes, not the
check.

\subsection{Scaling with Corpus Size}
\label{sec:scalability}

Building the HNSW store from $100$k to $8.8$M vectors (Fig.~\ref{fig:scalability},
Table~\ref{tab:memory}), per-insert latency grows with the corpus---$1.6$\,ms at
$100$k to $46$\,ms at $8.8$M---and index memory grows linearly at $\approx4.3$\,GB per
million vectors. The admission check is by contrast \emph{flat} in corpus size:
$O(|S|\,d)$, $0.04$\,ms regardless of $N$, and the sentinel index adds a fixed
$|S|\cdot d$ (here $20$\,MB at $|S|{=}5000$) independent of $N$. The maintenance
overhead that scales is the store's, not the admission index's, so the gate's relative
cost \emph{shrinks} as the corpus grows---from a few percent at $100$k to $<\!0.1\%$ at
$8.8$M. Table~\ref{tab:overhead} breaks down the four ingest-path operations at $8.8$M,
and Table~\ref{tab:sensitivity} the sensitivity to the admission threshold $\tau$.

\begin{figure}[t]
\centering
\includegraphics[width=0.72\columnwidth]{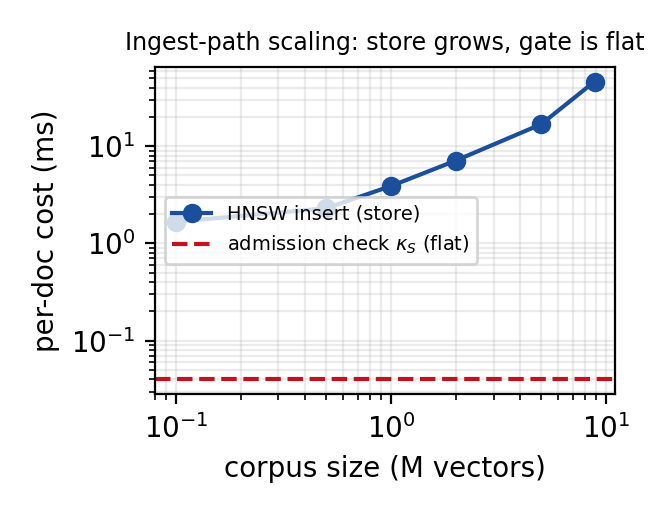}
\Description{Per-insert latency grows with corpus size on a log-log plot while the admission check stays flat and orders of magnitude below.}
\caption{Ingest-path scaling. Per-insert HNSW latency grows with corpus size while the
$O(|S|d)$ admission check stays flat and orders of magnitude below it.}
\label{fig:scalability}
\end{figure}

\begin{table}[t]
\centering
\caption{Store scaling (HNSW, $M{=}16$). Index memory grows linearly in corpus size;
the sentinel admission index adds a fixed $|S|\cdot d$ ($20$\,MB at $|S|{=}5000$),
independent of $N$.}
\label{tab:memory}
\small
\begin{tabular}{@{}rrrr@{}}
\toprule
corpus $N$ & index memory & insert (ms) & docs/s \\
\midrule
$100$k & $0.42$\,GB & $1.6$  & $600$ \\
$1$M   & $4.2$\,GB  & $3.9$  & $257$ \\
$2$M   & $8.5$\,GB  & $7.1$  & $141$ \\
$5$M   & $21$\,GB   & $16.8$ & $59$  \\
$8.8$M & $37$\,GB   & $46$   & $22$  \\
\bottomrule
\end{tabular}
\end{table}

\begin{table}[t]
\centering
\caption{Ingest-path overhead at $8.8$M, $|S|{=}5000$. The insert is the store's; the
admission index adds only the reverse-$k$NN check and, at a chosen cadence, revocation
re-evaluation---both far below the insert they ride on.}
\label{tab:overhead}
\small
\begin{tabular}{@{}lr@{}}
\toprule
ingest-path operation & per-doc cost \\
\midrule
HNSW insert (store)              & $46$\,ms \\
admission check $\kappa_S$        & $0.04$\,ms \\
promotion (maintenance update)    & one sentinel add \\
revocation re-evaluation (batched) & $0.02$\,ms/doc \\
\bottomrule
\end{tabular}
\end{table}

\begin{table}[t]
\centering
\caption{Sensitivity to the admission threshold $\tau$ ($m_{\mathrm{safe}}{=}5$,
$c_0{=}0$, $r{=}1$). The exposure floor grows linearly with $\tau$, trading tolerated
exposure against the benign false-positive rate; the per-insert admission cost is
$O(|S|d)$ and independent of $\tau$.}
\label{tab:sensitivity}
\small
\begin{tabular}{@{}rrrr@{}}
\toprule
threshold $\tau$ & exposure floor $E$ & admission check & sentinel memory \\
\midrule
$8$  & $13$ & $0.04$\,ms & $20$\,MB \\
$11$ & $16$ & $0.04$\,ms & $20$\,MB \\
$14$ & $19$ & $0.04$\,ms & $20$\,MB \\
$17$ & $22$ & $0.04$\,ms & $20$\,MB \\
$20$ & $25$ & $0.04$\,ms & $20$\,MB \\
\bottomrule
\end{tabular}
\end{table}

\subsection{Using the Bound to Provision the System}
\label{sec:provision}

Read as a design equation, the bound provisions the defense. To hold a region's
exposure at a target $E^\ast$, the promotion rate must satisfy
$r\ge\lceil\mathrm{deficit}/(E^\ast-m_{\mathrm{safe}})\rceil$. Emerging regions have
heterogeneous deficits (the five real COVID regions span $2$--$11$), so a naive
fixed-rate policy either misses the target on high-deficit regions or, sized for the
worst case, over-promotes the easy ones. Bound-guided provisioning sets $r$ per region
from its measured deficit: it meets the exposure target on \emph{every} region, where a
naive $r{=}1$ policy misses it on up to $80\%$ of them, and does so at up to $24\%$ less
sentinel-promotion churn than a conservative worst-case-sized fixed rate (measured on
the five real COVID regions above). The bound is thus not only a limit but a
configuration recipe.

\subsection{Cross-Encoder Breadth}
\label{sec:encoders}

The floor and the blind spot are properties of the count-versus-recall interaction, not
of a particular embedding. Table~\ref{tab:encoders} reruns the recall sweep on two
further encoders---GTE-large (1024-d) and MiniLM (384-d)---over a 1M-vector store. Both
reproduce the floor and the containment failure; on GTE the blind spot is in fact more
severe (IVF-Flat fails up to recall $0.78$, and IVF-PQ never contains at any
\texttt{nprobe}), and on MiniLM it recurs at the recall-calibrated operating point. With
BGE-1024 and E5-768 (\S\ref{sec:phase}), the mechanism is confirmed across four
embedding families spanning three dimensionalities, and fixB (\S\ref{sec:mitigation})
restores containment on each.

\begin{table}[t]
\centering
\caption{Cross-encoder breadth. The recall blind spot (highest index recall at which
containment still fails) recurs on every encoder; deficits and off-axis angles vary with
each encoder's geometry, the mechanism does not.}
\label{tab:encoders}
\small
\begin{tabular}{@{}lcccc@{}}
\toprule
encoder (dim) & store & deficit & blind-spot recall & fixB \\
\midrule
BGE (1024)   & 8.8M & 14 & $\le0.46$ (IVF-Flat) & restores \\
GTE (1024)   & 1M   & 6  & $\le0.78$ (IVF-Flat) & restores \\
MiniLM (384) & 1M   & 8  & $0.35$ (IVF-Flat)    & restores \\
\bottomrule
\end{tabular}
\end{table}